\newcommand{\newthmwithin}[3]{\newtheorem{#1q}{#2}[#3]
                        \newenvironment{#1}{\begin{#1q}\sf}{\end{#1q}}}
\newcommand{\newthm}[3]{\newtheorem{#1q}[#2q]{#3}
                        \newenvironment{#1}{\begin{#1q}\sf}{\end{#1q}}}
\def\GrabProofArgument[#1]{ #1: \egroup\ignorespaces}
\def\proof{\noindent\textbf\bgroup Proof%
           \@ifnextchar[{\GrabProofArgument}{. \egroup\ignorespaces}}
\let\realbfseries=\bfseries
\def\bfseries{\realbfseries\boldmath}
\newtheorem{conj}{Conjecture}
\newtheorem{claim}{Claim}
\renewcommand{\S}{\mathbb{S}}
\newcommand{\sand}{\ensuremath{\mathsf{f}}-\ensuremath{\mathsf{SAND}}}
\newcommand{\NPhard}{\ensuremath{\mathsf{NP}}-hard}
\begin{document}
\title{Single-sink Fractionally Subadditive Network Design}

\author[1]{Guru Guruganesh \thanks{This material is based upon work supported in part by National Science Foundation awards CCF-1319811,CCF-1536002, and CCF-1617790}}
\author[1]{Jennifer Iglesias \thanks{This material is based upon work supported by the National Science Foundation Graduate Research Fellowship Program under Grant No. 2013170941} }
\author[1]{R.~Ravi \thanks{This material is based upon research supported in part by the U. S. Office of Naval Research under award number N00014-12-1-1001, and the U. S. National Science Foundation under award number CCF-1527032.}}
\author[2]{Laura Sanit\`a}
\affil[1]{Carnegie Mellon University, Pittsburgh, PA, USA \\ \texttt{\{ggurugan, jiglesia, ravi\}@andrew.cmu.edu}}
\affil[2]{University of Waterloo, Waterloo, Canada \\ \texttt{laura.sanita@uwaterloo.ca}}






\maketitle

\begin{abstract}
We study a generalization of the Steiner tree problem, where we are given a weighted network $G$ together with a collection of $k$ subsets of its vertices and a root $r$. We wish to construct a minimum cost network such that the network supports one unit of flow to the root from every node in a subset simultaneously. The network constructed does not need to support flows from all the subsets simultaneously.

We settle an open question regarding the complexity of this problem for $k=2$, and give a $\frac{3}{2}$-approximation algorithm that improves over a (trivial) known 2-approximation. Furthermore, we prove some structural results that prevent many well-known techniques from doing better than the known $O(\log n)$-approximation. Despite these obstacles, we conjecture that this problem should have an $O(1)$-approximation. We also give an approximation result for a variant of the problem where the solution is required to be a path.

\end{abstract}

\noindent \textbf{Subject Class: }F.2.2 Nonnumerical Algorithms and Problems, G2.2 Graph Theory\\
\textbf{Keywords: }Network design, single-commodity flow, approximation algorithms, Steiner tree.

\newenvironment{LabeledProof}[1]{\noindent{\bf Proof of #1: }}{\qed}

\section{Introduction}

We study a robust version of a single-sink network design problem  that we call the \emph{Single-sink fractionally-subadditive network design} (\sand{}) problem.
In an instance of \sand{}, we
are given an undirected graph $G=(V,E)$ with edge costs $w_e \geq 0$ for all
$e \in E$, a root node $r \in V$, and $k$ \emph{colors} represented as vertex
subsets $C_i \subseteq V\setminus \{r\}$ for all $i \in [k]$, that wish to send
flow to $r$.  A feasible solution is an integer capacity installation on the edges
of $G$, such that for every $i \in [k]$, each node in $C_i$ can  \emph{simultaneously} send one unit
of flow to $r$.
Thus, the total flow sent by color $i$ nodes is
is $|C_i|$ while the
flows sent from nodes of different colors are instead \emph{non-simultaneous}
and can share capacity.
An optimal solution is a feasible one that minimizes the total cost of the installation.

The single-sink nature of the problem suggests a natural
\emph{cut}-covering formulation, namely:
\begin{equation}\label{eq:ip}\tag{IP}
	\begin{split}
		\min  &\sum_{e\in E} w_e x_e \quad\text{ s.t.}\\
	 	&\sum_{e \in \delta(S)} x_e \ge f(S)\qquad\quad \forall S\subset V\setminus \{r\}\\
		&x\ge 0\,, \quad x \in \mathbb{Z} \,,
	\end{split}
\end{equation}
\noindent
where $\delta(S)$ denotes the set of edges with exactly one endpoint in $S$,
and
\begin{align}
    f(S) := \max_{i \in [k]} \{|C_i \cap S|\} \label{defn:fs}
\end{align} for all $S \subseteq V\setminus
\{r\}$. Despite having exponentially many
constraints, the LP-relaxation of (IP) can be solved in polynomial-time
because the separation problem reduces to performing $k$ max-flow computations.
The main challenge is to round the resulting solution into an integer solution.

Rounding algorithms for the LP relaxation of (IP) have been investigated by many
authors, under certain assumptions of the function $f(S)$.
Prominent examples are some classes of 0/1-functions (such as \emph{uncrossable} functions), or integer-valued functions such as \emph{proper} functions, or  \emph{weakly supermodular} functions~\cite{GW95,J01};
however, these papers consider arbitrary cut requirements rather than the single-sink connectivity requirements we study.

Our single-sink problem is a special case of a broader class of subadditive network design problems where the function $f$ is allowed to be a general subadditive function.
Despite their generality, the single-sink network design problem for general subadditive functions can be approximated within an $O(\log |V|)$ factor by using a tree drawn from the probabilistic tree decomposition of the metric induced by $G$ using the results of Fakcharoenphol, Rao, and Talwar~\cite{FRT04}, and installing the required capacity on the tree edges.
Hence, a natural direction is to consider special cases of such subadditive cut requirement functions.


Our function $f(S)$ defined in~(\ref{defn:fs}) is an interesting and important special case of subadditive functions.
It was introduced as XOS-functions (max-of-sum functions) in the
context of combinatorial auctions by Lehman et al.~\cite{LLN01}. Feige~\cite{F09} proved
that this function is equivalent to fractionally-subadditive functions
which are a strict generalization of submodular functions (hence the title).  These
functions have been extensively studied in the context of learning theory and
algorithmic game theory~\cite{BCIW12,BR11,LLN01}.
Our work is an attempt to understand their behavior as single-sink network design requirement functions.


\sand{} was first studied by Oriolo et al.~\cite{OSZ13} in the context of
\emph{robust} network design, where the goal is to install minimum cost
capacity on a network in order to satisfy a given set of (non-simultaneous)
traffic demands among terminal nodes. Each subset $C_i$ can in fact be seen as
a way to specify a distinct traffic demand that the network would like to
support.  They observed that \sand{} generalizes the Steiner tree problem:
an instance of the Steiner tree problem with $k+1$ terminals $t_1, \dots, t_{k+1}$
is equivalent to the \sand{} instance with $r:= t_{k+1}$ and
$C_i := \{ t_i \}$ for all $i \in [k]$. This immediately shows that \sand{} is \NPhard{}
(in fact, \ensuremath{\mathsf{APX}}-hard \cite{CC02}) when $k$ is part of the input. The authors in
\cite{OSZ13} strengthened the hardness result by proving that \sand{} is \NPhard{}
even if $k$ is not part of the input, and in particular for $k=3$ (if
$k=1$ the problem is trivially solvable in polynomial-time by computing a
shortest path tree rooted at $r$). From the positive side, they observed that
there is a trivial $k$-approximation algorithm, that relies on routing via
shortest paths, and an $O(\log |\cup_i C_i| )$-approximation algorithm using
metric embeddings \cite{FRT04,GNR10}. The authors conclude their paper
mentioning two open questions, namely whether the problem is polynomial-time
solvable for $k=2$, and  whether there exists an $O(1)$-approximation algorithm.

\subsection{Our results}
\begin{enumerate}
\item In this paper, we answer the first open question in~\cite{OSZ13} by showing that \sand{} is
\NPhard{} for $k=2$ via a reduction from \ensuremath{\mathsf{SAT}}.
\item We give a $\frac{3}{2}$-approximation algorithm for this case ($k=2)$. This is the
    first improvement over the (trivial) $k$-approximation obtained using shortest paths for any $k$. Our
approximation algorithm is based on pairing terminals of different groups
together, and therefore reducing to a suitable minimum cost matching problem.
While the idea behind the algorithm is natural, its analysis requires a
deeper understanding of the structure of the optimal solution.


\item We also introduce an interesting variant of \sand, which we call the
Latency-\sand{} problem, where the network built is restricted to being a
\emph{path} with the root $r$ being one of the endpoints (\sand-path). We show
a $O(\log^2 k \log n)$-approximation using a new reformulation of the problem
that allows us to exploit techniques recently developed for \emph{latency}
problems \cite{CS16}.

\item While being a generalization of well-studied problems, \sand{} does
not seem to admit an easy $O(1)$-approximation via standard
LP-rounding techniques for arbitrary values of $k$. We prove some structural results that
highlight the difficulty of the general problem (see Appendix A).  In particular, we show a family of a instances
providing a super-constant gap between an optimal \sand{} solution
and an optimal \emph{tree}-solution, i.e., a solution whose support is a tree
-- this rules out many methods that output a solution with a tree structure.
The bulk of the construction was shown in \cite{GOS11} and we amend it to our problem using a simple observation.
 Furthermore, we give some evidence that an iterative rounding approach (as in
Jain's fundamental work \cite{J01}) is unlikely to work. This follows by considering a special class of Kneser Graphs, where the LP seems to put low fractional weight on each edge in an extreme point.

\item {\bf Open Questions.}  We offer the following conjecture as our main open
    question. \footnote{Although the problem is known in some circles, it
        has not been explicitly stated as a conjecture. We do so here, in the
        hopes that it will encourage others to work on this problem.}

\begin{conj}
There exists an $O(1)$-approximation algorithm for the \sand{} problem.
\end{conj}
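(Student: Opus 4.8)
\noindent The plan is to round an optimal solution $x^*$ of the LP relaxation of \eqref{eq:ip}, which (as noted above) is computable in polynomial time, into an integral solution of cost $O(\mathrm{OPT})$, where $\mathrm{OPT}$ denotes the optimum of an \sand{} instance. The structural results mentioned above already rule out the two most natural rounding strategies: any algorithm whose output has tree (or forest) support must lose a super‑constant factor, and Jain‑style iterative rounding stalls, both because the requirement function $f$ in \eqref{defn:fs} lacks the (skew‑)supermodularity that Jain's analysis requires, and because the extreme points of the relaxation can place weight $o(1)$ on every edge (the Kneser‑graph examples). So a successful rounding must retain a genuinely non‑tree support while not relying on any ``heavy edge.'' The route I would pursue generalizes the matching idea behind our $\frac32$‑approximation for $k=2$: group a few terminals of \emph{distinct} colors together, route each group jointly to $r$, and exploit the fact that flows of different colors may reuse installed capacity.

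\noindent Concretely: (i) solve the LP; (ii) from $x^*$ extract, for each color $i$, a flow of value $|C_i|$ into $r$ together with a path decomposition; (iii) use these decompositions to set up a covering LP whose candidate objects are \emph{rainbow groups} $G$ --- sets containing at most one terminal of each color --- each equipped with a cheap Steiner tree $T_G$ on $G\cup\{r\}$, subject to the constraint that every terminal lies in some chosen group (two terminals of different colors being natural to group when their extracted flow‑paths share a long common prefix to $r$); (iv) round this covering LP; (v) output $\bigcup_G T_G$, installing on each edge $e$ a capacity equal to the number of chosen groups whose tree contains $e$. Feasibility is immediate: since a group contributes at most one source per color, every color can route to $r$ simultaneously inside this capacity. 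The cost of the output equals $\sum_G w(T_G)$, so the whole problem reduces to exhibiting a family of rainbow groups and trees of total cost $O(\mathrm{OPT})$ and finding a near‑optimal such family in polynomial time.

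\noindent The main obstacle is precisely the cost analysis in steps (iii)--(iv). The naive choice --- make each terminal its own singleton group with its shortest path to $r$ --- only recovers the trivial bound $\sum_i D_i\le k\cdot\mathrm{OPT}$, where $D_i$ is the cost of serving color $i$ in isolation; to beat it one must show that the optimal solution itself \emph{bundles} flows of distinct colors along shared edges, and read off from a near‑optimal fractional solution a grouping that imitates this bundling. For $k=2$ this is exactly what the analysis of the optimal structure achieves, and it is what makes a minimum‑cost matching formulation possible; for $k\ge 3$ the corresponding ``min‑cost rainbow clustering'' looks genuinely harder, and no charging argument against $\mathrm{OPT}$ is apparent. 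A cleaner target that would also suffice is the structural inequality $\mathrm{OPT}=O\!\left(\max_i D_i+\mathrm{ST}\right)$, where $\mathrm{ST}$ is the minimum Steiner tree on $\bigcup_i C_i\cup\{r\}$; both $\max_i D_i$ and $\mathrm{ST}$ are $O(\mathrm{OPT}_{LP})$, so this would even pin the integrality gap of \eqref{eq:ip} to a constant. But proving such a bound would still leave the algorithmic step open, since one cannot simply install capacities on a single Steiner tree --- that is the tree‑gap obstacle again.

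\noindent A complementary line I would run in parallel is a direct rounding of \eqref{eq:ip} exploiting the $\mathrm{XOS}$ form $f(S)=\max_i\langle\mathbf{1}_{C_i},\mathbf{1}_S\rangle$: decompose $x^*$ into per‑color fractional forests and recombine them through a geometric‑scaling scheme, aiming to replace the $O(\log n)$ loss of metric embeddings by a bounded number of ``active'' scales per color. I expect any proof of the conjecture to ultimately rest on a decomposition lemma that converts the sharing implicit in the objective $\sum_e w_e\max_i(\cdot)$ into a bounded‑depth combinatorial structure; isolating and proving such a lemma is, in my view, the crux.
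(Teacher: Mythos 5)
This statement is a \emph{conjecture} in the paper, explicitly offered as an open question; the paper contains no proof of it, so there is nothing for your proposal to be compared against. You have in fact correctly recognized this: your write-up is a research plan, not a proof, and you flag the missing step yourself.

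For the record, the gap is real and is exactly where you place it. Steps (i), (ii), and (v) of your rainbow-grouping scheme are unobjectionable, and the feasibility observation --- that installing, on each edge, capacity equal to the number of chosen rainbow groups using it supports every color simultaneously --- is correct and is the natural $k$-ary analogue of the pairing used in the $k=2$ algorithm. What is missing is step (iii)--(iv): you have no charging argument showing that some family of rainbow groups with Steiner trees of total cost $O(\mathrm{OPT})$ exists, nor an LP whose fractional optimum you can bound against $\mathrm{OPT}$ and then round. For $k=2$ the paper's analysis leans heavily on the split/fresh-pair structure of an optimal routing, which pairs each green terminal with a single blue partner in a way that lets the matching cost be charged to $w_b, w_g, w_t, w_d$; there is no known generalization of the split graph and the $\S$-alternating-path decomposition to $k\ge 3$ colors, and the ``min-cost rainbow clustering'' you mention would require such a structural theorem. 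Your proposed structural target $\mathrm{OPT}=O(\max_i D_i + \mathrm{ST})$ is also unproved and, as you correctly note, would not by itself give an algorithm because of the tree-gap obstacle. So: the conjecture remains open, your sketch is a sensible direction consistent with the paper's discussion, but it is not a proof, and no step in it closes the acknowledged gap.
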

Although standard LP-based approaches seem to fail in providing a constant factor approximation, the worst known integrality gap example we are aware of yields a (trivial) lower bound of $2$ on the integrality
gap of (IP) for \sand. A related open question is if there is an instance of \sand{} for which the integrality gap of (IP) is greater than $2$.
\end{enumerate}

\subsection{Related work.}
Network design problems where the goal is to build a minimum cost network in
order to support a given set of flow demands, have been extensively studied in
the literature (we refer to the survey \cite{C07}). There has been a huge
amount of research focusing on the case the set of demands is described via a
polyhedron (see e.g. \cite{BK05}). In this context a very popular model is the
\emph{Virtual private network} \cite{Duff99,Fing97}, for which many
approximation results have been developed (see e.g. \cite{GOS13, GRS11,
    GKKRY01} and the references therein). For the case where the set of demands
is instead given as a (finite) discrete list, the authors in \cite{OSZ13}
developed a constant factor approximation algorithm on ring networks, and
proved that \sand{} is polynomial-time solvable on
ring networks. \\
Regarding the formulation (IP), Goemans and Williamson
\cite{GW95} gave a $O(\log (f_{max}))$-approximation algorithm for solving (IP)
whenever $f(S)$ is an integer-valued proper function that can take values up to
$f_{max}$, based on a primal-dual approach. Subsequently, Jain \cite{J01}
improved this result by giving a 2-approximation algorithm using iterative
rounding of the LP-relaxation. Recently, a strongly-polynomial time FPTAS to
solve the LP-relaxation of (IP) with proper functions has been given in
\cite{FKPS16}.

\section{$3/2$-approximation for the two color case}
The goal of this section is to give a $\frac{3}{2}$-approximation algorithm for
SAND with two colors. We remark that our algorithm bypasses the difficulties
mentioned in the previous section. In particular, the final output is not a tree.

\subsection{Simplifying Assumptions.} We will refer to the two colors
as \emph{green} and \emph{blue}, and let $C_G \subset V$ denote the set of
green terminals, and $C_B \subset V$ denote the set of blue terminals. Without
loss of generality, we will assume that $|C_G|=|C_B|$, i.e., the
cardinality of green terminals is equal to the cardinality of blue terminals
(if not,  we could easily add dummy nodes at distance 0 from the root).
Furthermore, by replacing each edge in the original graph with $|C_G|$ parallel
edges of the same cost, we can assume that in a feasible solution the capacity
installed on each edge must be either 0 or 1. This means that each edge is used
by \emph{at most} one terminal of $C_G$ (resp. $C_B$) to carry flow to the
root. Lastly, we assume that every terminal in $C_G$ shares at least one edge
with some terminal in $C_B$ in the optimal solution.\footnote{We can easily
    ensure this e.g. by modifying our instance as follows: we add a dummy node
    $r'$ which is only connected to $r$ with $|C_G|$ parallel edges of $0$ cost,
    and we make $r'$ be the new root. In this way, all terminals will use one
    copy of the edge $(r,r')$.}

Let OPT denote an optimal solution to a given instance of SAND with two
colors.  We start by developing some results on the structure of OPT, that will
be crucial to analyze our approximation algorithm later.

\subsection{Understanding the structure of OPT}
A feasible solution of a SAND instance consists of a (integer
valued) capacity installation on the edges that allows for a flow from the
terminals to the root. Given a feasible solution, each terminal will send its
unit of flow to $t$ on a single path. Let us call the collection of such paths
a \emph{routing} associated with the feasible solution.  The first important
concept we need is the concept of splits.

\subsubsection{Shared Edges and Splits.}
Given a routing, for each terminal $g \in C_G$
(and $b \in C_B$ respectively) let $P_{g}$ ($P_{b}$) denote the path along which $g$ ($b$) sends flow to the root;
i.e.~$P_{g}:=\{g=x_{0},x_{1}, \dots, x_{|P_{g}|}=r\}$.
We say that an edge $e$ is \textbf{shared}
if the paths of two terminals of different color contain the edge.
We say that $g \in C_G$ and $b \in C_B$ are \textbf{partners} with respect
to a shared edge $e=uv$, if their respective paths use the edge
$e$; i.e.~$e \in P_{g} \cap P_{b}$.

\begin{definition}
A \textbf{split} in the path $P_{g}$ is a
maximal set of consecutive edges of the path
such that $g$ is partnered with some $b$ on all the
edges of this set.
\end{definition}

If $\{(x_{i},x_{i+1}),(x_{i+1},x_{i+2}), \dots,(x_{i+j-1},x_{i+j})\}$
is a split in the path $P_{g}=\{g=x_{0},x_{1}, \dots, x_{|P_{g}|}=r\}$ for $g \in C_G$, then  there exists
a unique terminal $b \in C_B$ such that $P_b$ contains the edges
$\{(x_{i},x_{i+1}),(x_{i+1},x_{i+2}), \dots,(x_{i+j-1},x_{i+j})\}$, $P_b$ does
not contain the edge $(x_{i-1},x_{i})$, and if $x_{i+j} \neq r$ then $P_b$ does
not contain the edge $(x_{i+j},x_{i+j+1})$. By our assumptions,
the terminal $b$ is unique as each edge is used by at most one terminal of
each color.

Since the flow is going from a terminal $g$ to $r$, the path $P_g$
naturally induces an orientation on its edges given by the direction of the flow, even though the edges are
undirected. Of course, the paths of different terminals could potentially induce opposite orientations
on (some of) the shared edges (see Figure~\ref{fig1}).

\begin{definition}
    A split is \textbf{wide}, if the paths of the
    two terminals that are partners on the edges of the split induce opposite orientations on the edges.
    A split is \textbf{thin}, if the paths of the
    two terminals that are partners on the edges of the split induce the same orientation on the edges.
\end{definition}

The above notions are well defined for any routing with respect to a feasible
solution. Now, we focus on the structure of an optimal routing, i.e., a
routing with respect to an optimal solution.  For the rest of this
section, we let $\{P_{g}\}_{g \in C_G}$ and $\{P_{b}\}_{b \in C_B}$ be an
optimal routing. The following lemma is immediate.

\begin{lemma} \label{lem:sp}
    Let $\{(x_{i},x_{i+1}),(x_{i+1},x_{i+2}), \dots,(x_{i+j-1},x_{i+j})\}$ be a
    split in the path $P_{g}$ (for some $g \in C_G$). The edges of the split
    form a shortest path from $x_{i}$ to $x_{i+j}$.
\end{lemma}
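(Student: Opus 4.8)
The plan is a local exchange argument against the optimality of OPT: if the edges of a split failed to form a shortest path between its endpoints, we could strictly decrease the cost, contradicting optimality. Write the split as $S=\{(x_i,x_{i+1}),\dots,(x_{i+j-1},x_{i+j})\}$, a contiguous sub-path of $P_g$ joining $u:=x_i$ and $v:=x_{i+j}$, and let $b\in C_B$ be the unique partner of $g$ on $S$ (as described in the paragraph following the definition of a split). By that discussion $P_b$ contains every edge of $S$, and since $P_b$ is a simple path these edges occur as one contiguous stretch of $P_b$ as well. Moreover, by the simplifying assumption that capacities are $0/1$ (so every edge carries at most one green and at most one blue terminal), each edge of $S$ is used by exactly the two terminals $g$ and $b$ and by no one else. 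For an edge set $F$ write $w(F):=\sum_{e\in F}w_e$.

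Suppose, for contradiction, that $w(S)$ strictly exceeds the $u$--$v$ distance in $G$, and fix a shortest $u$--$v$ path $Q$, so $w(Q)<w(S)$. Form a new routing by replacing, in both $P_g$ and $P_b$, the stretch $S$ with $Q$ (oriented appropriately), shortcutting if necessary so that both become simple paths to $r$; every other terminal keeps its path. Assigning to each edge $e$ the capacity equal to the maximum, over the two colors, of the number of terminals of that color whose path uses $e$, yields a feasible solution realizing this routing (and the available parallel copies suffice, since this maximum is at most $|C_G|=|C_B|$). Now account for the cost. Each edge of $S\setminus Q$ loses its only users $g$ and $b$, so its required capacity drops from $1$ to $0$; each edge of $Q\setminus S$ has its green count and its blue count each increase by at most one (only $g$, resp.\ $b$, is newly routed through it), so its required capacity increases by at most one; every remaining edge is unaffected, and the optional shortcutting can only lower the cost further. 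Hence the change in cost is at most $w(Q\setminus S)-w(S\setminus Q)=w(Q)-w(S)<0$, contradicting the optimality of OPT. Therefore the edges of the split form a shortest $u$--$v$ path.

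The only delicate point — and the reason the cost accounting goes through — is the use of the $0/1$-capacity reduction in two places: first, to know that no terminal other than $g$ and $b$ rides on an edge of $S$, so that emptying those edges genuinely refunds their full cost; and second, to conclude that adding $g$ and $b$ to the edges of $Q$ raises each such edge's requirement by at most one. Once these are in hand, the existence of $Q$, the feasibility of the rerouted solution, and the identity $w(Q\setminus S)-w(S\setminus Q)=w(Q)-w(S)$ are all routine, which is presumably why the statement is flagged as immediate.
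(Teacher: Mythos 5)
Your proposal is correct and follows the same exchange argument as the paper: replace the split's edges by a shortest $x_i$--$x_{i+j}$ path in both $P_g$ and $P_b$, observe that only $g$ and $b$ use the split edges (thanks to the $0/1$-capacity reduction), and conclude the swap strictly decreases cost. The extra bookkeeping you supply — the $w(Q\setminus S)-w(S\setminus Q)=w(Q)-w(S)$ identity, the shortcutting to keep paths simple, and the per-color capacity accounting on $Q\setminus S$ — is a more explicit version of the same reasoning the paper compresses into one sentence.
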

\begin{proof}
    If not, we could replace this set of edges with the set of edges of a
    shortest path  from $x_{i}$ to $x_{i+j}$, in both $P_g$ and $P_b$, where
    $b$ is the partner of $g$ on the split. Therefore, we can install one unit
    of capacity on these edges, and remove the unit of capacity from the edges
    of the split. We get another feasible solution with smaller cost, a
    contradiction to the optimality of our initial solution.
\end{proof}

\subsubsection{Split Graph.}
A consequence of Lemma \ref{lem:sp} is each split is entirely
characterized by the endpoints of the split and the terminals that share
them. We denote each split by a tuple $(u,v,g,b)$ which states
that there is a shortest path between $u$ and $v$ whose edges are shared by $g$
and $b$.

\begin{figure}
\centering
	\begin{tikzpicture}
	\node[circle, fill=black,thick, inner sep=2pt, minimum size=0.1cm, label=above:$r$](r) at (0,5) {};
	\node[circle, fill=black,thick, inner sep=2pt, minimum size=0.1cm](1) at (-1.5,4) {};
	\node[circle, fill=black,thick, inner sep=2pt, minimum size=0.1cm](2) at (1.5,4) {};
	\node[circle, fill=black,thick, inner sep=2pt, minimum size=0.1cm](3) at (-2,2.5) {};
	\node[circle, fill=black,thick, inner sep=2pt, minimum size=0.1cm](4) at (2,2.5) {};
	\node[circle, fill=black,thick, inner sep=2pt, minimum size=0.1cm](5) at (-1,1) {};
	\node[circle, fill=black,thick, inner sep=2pt, minimum size=0.1cm](6) at (1,1) {};
	\node[rectangle, fill=blue,thick, inner sep=2pt, minimum size=0.2cm, label=below:$b_2$](b2) at (-3,1) {};
	\node[rectangle, fill=blue,thick, inner sep=2pt, minimum size=0.2cm, label=below:$b_1$](b1) at (-1.5,0) {};
	\node[regular polygon, regular polygon sides=3, fill=green,thick, inner sep=2pt, minimum size=0.1cm, label=below:$g_2$](g2) at (3,1) {};
	\node[regular polygon, regular polygon sides=3, fill=green,thick, inner sep=2pt, minimum size=0.1cm, label=below:$g_1$](g1) at (1.5,0) {};
	\draw[->, -{Latex[length=3mm]}, blue] (b1)-- (5);
	\draw[->, -{Latex[length=3mm]}, green] (5)-- (3);
	\draw[->, -{Latex[length=3mm]}, cyan] (3)-- (1);
	\draw[->, -{Latex[length=3mm]}, cyan] (1)-- (r);
	\draw[->, -{Latex[length=3mm]}, green] (g1)-- (6);
	\draw[->, -{Latex[length=3mm]}, blue] (6)-- (4);
	\draw[->, -{Latex[length=3mm]}, cyan] (4)-- (2);
	\draw[->, -{Latex[length=3mm]}, cyan] (2)-- (r);
	\draw[->, -{Latex[length=3mm]}, blue] (b2)-- (3);
	\draw[->, -{Latex[length=3mm]}, green] (g2)-- (4);
	\draw[->,-{Latex[length=3mm]}, blue] (5) to[out=30,in=150] (6);
	\draw[->,-{Latex[length=3mm]}, green] (6) to[out=-150, in=-30] (5);
\node[circle, fill=black,thick, inner sep=2pt, minimum size=0.1cm, label=above:$g_1-b_2$](11) at (7,4) {};
	\node[circle, fill=black,thick, inner sep=2pt, minimum size=0.1cm, label=above:$g_2-b_1$](12) at (9,4) {};
	\node[circle, fill=black,thick, inner sep=2pt, minimum size=0.1cm, label=right:$g_1-b_1$](13) at (8,2) {};
	\node[rectangle, fill=blue,thick, inner sep=2pt, minimum size=0.2cm, label=below:$b_2$](1b2) at (6,2) {};
	\node[rectangle, fill=blue,thick, inner sep=2pt, minimum size=0.2cm, label=below:$b_1$](1b1) at (7,0) {};
	\node[regular polygon, regular polygon sides=3, fill=green,thick, inner sep=2pt, minimum size=0.1cm, label=below:$g_2$](1g2) at (10,2) {};
	\node[regular polygon, regular polygon sides=3, fill=green,thick, inner sep=2pt, minimum size=0.1cm, label=below:$g_1$](1g1) at (9,0) {};
	\draw[->, -{Latex[length=3mm]}, blue] (1b1)-- (13);
	\draw[->, -{Latex[length=3mm]}, green] (13)-- (11);
	\draw[->, -{Latex[length=3mm]}, blue] (13)-- (12);
	\draw[->, -{Latex[length=3mm]}, blue] (1b2)-- (11);
	\draw[->, -{Latex[length=3mm]}, green] (1g2)-- (12);
	\draw[->, -{Latex[length=3mm]}, green] (1g1)-- (13);

	\end{tikzpicture}

    \caption{The above left graph (where each undirected edge is supposed to have unit capacity) shows an optimal routing for some \sand{} instance. Note that $b_1$ and $g_2$ (resp. $b_2$ and $g_1$) send flow to $r$ going counterclockwise (resp. clockwise) on the edges of the cycle. The path $P_{b_1}$ contains two splits: the first is wide ($b_1$ is partnered with $g_1$), the second is thin ($b_1$ is partnered with $g_2$).\\ The graph on the right is the Split Graph for the optimal solution on the left. The pair of vertices $g_1,b_1$ and the pair of vertices $g_2,b_2$ constitute the fresh pairs.}
		\label{fig1}
\end{figure}
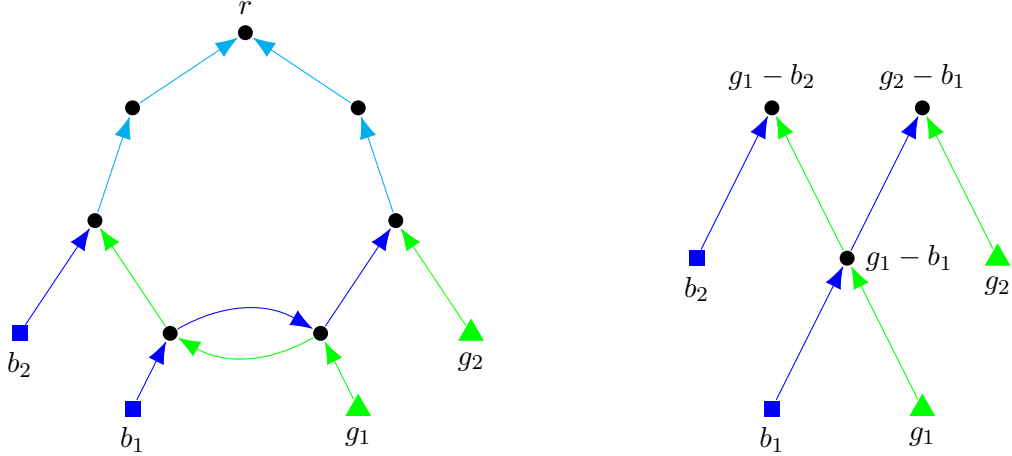

Let $\S$ denote the set of all splits in the optimal routing. We construct a
directed graph $G^{\S}$ whose vertex set corresponds to $V = \S \cup C_G \cup
C_B$ (i.e. the vertex set contains one vertex for each split and one vertex for
each terminal). For each $g \in C_{G}$, we place a directed \emph{green} edge
going between two consecutive splits in $P_{g}$.  Specifically, if
$\{(x_{i},x_{i+1}), \dots,(x_{i+j-1},x_{i+j})\}$ and $\{(x_{i'},x_{i'+1}),
\dots,(x_{i'+j'-1},x_{i'+j'})\}$ are two splits in $P_{g}$ with $i < i'$, we
say that they are consecutive if the subpath from $x_{i+j}$ to $x_{i'}$ does
not contain any split. In this case, we place a directed edge  in $G^{\S}$
whose tail is the vertex corresponding to the first split, and whose head is
the vertex corresponding to the second one.  Similarly, for each $b \in C_{B}$
we place a directed \emph{blue} edge between vertices of consecutive splits
that appear in $P_{b}$.  Furthermore, for each $g \in C_G$ (resp. $b \in C_B$)
we place a directed green (resp. blue) edge from $g$ (resp. $b$) to the vertex
corresponding to the first split on the path $P_g$ (resp. $P_b$), if any.
This graph is denoted as the \textbf{Split Graph} (see Figure~\ref{fig1}).

Each split indicates that two terminals of different colors are
sharing the capacity on a set of edges in an optimal routing. Hence, each
split-vertex in $G^{\S}$ has indegree 2 (in particular, one edge of each
color).  Furthermore, each split-vertex in $G^{\S}$ has outdegree either $0$ or
$2$; if it has two outgoing edges, one is green and one is blue. Similarly,
each terminal has indegree $0$, and outdegree $1$ (as we assume that
each terminal shares at least one edge).

\subsubsection{Fresh Pairs.}

We need one additional definition before proceeding to the algorithm.
\begin{definition}
     An \textbf{$\S$-alternating sequence} is a sequence of vertices of the Split Graph $\{v,s_1,s_2,\dots,s_h,w\}$ with $h\geq 1$,
    that satisfies the following:
    \begin{enumerate}
        \item[(i)] $(v,s_1)$ and $(w,s_h)$ are directed edges in $G^{\S}$
              and $v,w$ are terminals of different color.
        \item[(ii)] For all even $i \geq 2$, $(s_i,s_{i-1})$ and $(s_i,s_{i+1})$ are
              both directed edges in $G^{\S}$ with opposite colors.
    \end{enumerate}
    We call the path obtained by taking the edges in (i) and (ii) an \textbf{$\S$-alternating path}.
		We call $(v,w)$  a \textbf{fresh pair} if they are the endpoints of an $\S$-alternating path.
\end{definition}
By definition, in an $\S$-alternating sequence the vertices $s_1, \dots, s_h$ are all split-vertices,
and $h$ is odd. We remark here that an $\S$-alternating path is \emph{not} a directed path.
(See again Figure~\ref{fig1}).

\begin{lemma}
    \label{lem:splitpartition}
    We can find a set of edge-disjoint $\S$-alternating paths in the Split
    Graph such that each terminal is the endpoint of exactly one path in this
    set.
\end{lemma}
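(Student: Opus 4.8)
\noindent The plan is to recast the statement as one about the connected components of an auxiliary graph of maximum degree $2$, in which paths and cycles appear automatically, and then check that its terminal-to-terminal paths are precisely the $\S$-alternating paths we want.

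First I would build this auxiliary graph $H$ from the Split Graph $G^{\S}$ by \emph{splitting} every split-vertex $s$ of outdegree $2$ into two vertices $s^{\mathrm{in}}$ and $s^{\mathrm{out}}$: the vertex $s^{\mathrm{in}}$ inherits the two edges of $G^{\S}$ entering $s$ (one green and one blue), and $s^{\mathrm{out}}$ inherits the two edges leaving $s$ (again one of each color). Every split-vertex of outdegree $0$ is kept unchanged and regarded as an ``in-copy'', and every terminal-vertex is kept; on each edge of $H$ I record its original color and orientation in $G^{\S}$. By the degree properties already established for $G^{\S}$ --- every split has indegree $2$ and outdegree $0$ or $2$, every terminal has indegree $0$ and outdegree $1$ --- each in-copy and each out-copy has degree exactly $2$ in $H$, and each terminal has degree exactly $1$. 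Hence $H$ is a disjoint union of simple paths and cycles whose path-endpoints are exactly the terminal-vertices. Discarding the cycles and keeping the terminal-to-terminal paths then yields an edge-disjoint family of paths in which each terminal is the endpoint of exactly one path; it only remains to check that every such path is an $\S$-alternating path.

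To do so, fix a terminal-to-terminal path $Q$ of $H$. Leaving a terminal uses its unique out-edge, which enters that terminal's first split, so $Q$ immediately reaches an in-copy of a split (one exists since every terminal shares at least one edge; this will give $h \ge 1$). From an in-copy $s^{\mathrm{in}}$ both incident edges of $H$ enter $s$, so the next vertex is a terminal or an out-copy. From an out-copy $s^{\mathrm{out}}$ both incident edges leave $s$, and since terminals have indegree $0$ every such edge has a split as its head, so the next vertex is again an in-copy. Therefore $Q$ has the form $v,\,s_1^{\mathrm{in}},\,s_2^{\mathrm{out}},\,s_3^{\mathrm{in}},\,\dots,\,s_h^{\mathrm{in}},\,w$ with $v,w$ terminals and $h$ odd. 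Condition~(i) holds because the first and last edges of $Q$ are the out-edges of $v$ and $w$, i.e.\ the directed edges $(v,s_1)$ and $(w,s_h)$ of $G^{\S}$. Condition~(ii) holds because, for even $i$, the two edges of $Q$ incident to $s_i^{\mathrm{out}}$ are exactly the two edges of $G^{\S}$ leaving $s_i$, which point out of $s_i$ and have opposite colors. Finally, at each internal vertex of $Q$ one enters and leaves through its two differently-colored incident edges, so the colors along $Q$ alternate; as $Q$ has $h+1$ edges and $h+1$ is even, the last edge ($w$'s out-edge) has color opposite to the first ($v$'s out-edge), so $v$ and $w$ have different colors, as required.

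I expect the construction of $H$ to be the whole point: once it is in place there is no genuinely hard step. The part that needs the most care is the color bookkeeping above --- both the alternation that forces $v$ and $w$ to receive different colors, and the opposite-color requirement of condition~(ii) --- along with the degenerate case $h=1$, which is exactly when $v$ and $w$ are partnered on their common first split and is covered by the same argument.
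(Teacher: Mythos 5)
Your proof is correct and matches the paper's approach: both exploit the degree structure of the Split Graph --- terminals have out-degree one, split-vertices have in-degree two and out-degree zero or two --- to decompose it into edge-disjoint terminal-to-terminal alternating paths. The paper presents this as an iterative trail-following construction (start at a terminal, alternately follow the second in-edge and second out-edge until another terminal is reached, remove, repeat), whereas you package the identical idea as connected components of the vertex-split auxiliary graph $H$; the latter is a slightly cleaner way to make the degree bookkeeping and the resulting path/cycle decomposition explicit, but it is the same argument.
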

\begin{proof}
    We construct the desired set as follows.
    For each vertex $g \in C_G$, there is a unique outgoing
    edge to a split vertex $s \in \S$ (as we assume every terminal participates
    in a split). Since each split-vertex has indegree $2$, $s$ has another
    ingoing edge coming from a different vertex $w$.  If $w \in
    C_B$, then $(v,w) $ is a fresh pair and we have found an $\S$-alternating
    sequence $\{v,s,w\}$.  If $w$ is a
    split-vertex, then it has another outgoing edge to a different split-vertex
    $s'$, which in its turn has another incoming edge from a different vertex $w'$.
    We continue to build an alternating sequence (and a corresponding alternating
    path) in this way until it terminates in a terminal. Since the path is of even
    length and the colors alternate, we can conclude that this will terminate in a
    terminal of opposite color. We remove the edges of this path from the Split
    Graph, and iterate the process. Each terminal will belong to
    exactly one $\S$-alternating path, as it has outdegree exactly 1, and all
    the paths are edge-disjoint, proving the lemma.
\end{proof}

\subsection{The Algorithm}
We are now ready to present our \emph{matching algorithm}. The algorithm has two steps.
First, construct a complete bipartite
graph $\mathcal H$ with the bipartitions $C_G$ and $C_B$, where the weight on
the edge $(g,b) \in C_G \times C_B$ is equal to the cost of the Steiner tree in
$G$ connecting $g, b$ and the root.  Note that the graph $\mathcal H$ can be computed in
polynomial time, since a Steiner tree on $3$ vertices can be easily computed in
polynomial time.

Second, find a minimum-weight perfect matching $\mathcal{M}$ in $\mathcal H$, and for
each edge $(g,b) \in \mathcal M$  install (cumulatively) one unit of capacity
on each edge of $G$ that is in the Steiner tree associated to the edge $(g,b)
\in \mathcal M$.  The capacity installation output by
this procedure is a feasible solution to \sand{}, and has total cost equal
to the weight of $\mathcal M$.

\begin{lemma}
    The matching algorithm is a $\frac32$-approximation algorithm.
\end{lemma}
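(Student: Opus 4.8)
The plan is to prove the two claims implicit in the statement: the output is feasible with cost exactly $w(\mathcal M)$, and $w(\mathcal M)\le \tfrac32\,\textrm{OPT}$. Feasibility is routine: for each matched pair $(g,b)\in\mathcal M$ the associated Steiner tree contains a $g$--$r$ path and a $b$--$r$ path, so every green terminal routes its unit to $r$ inside its own tree, and since we pay for each tree separately (the parallel copies guarantee $0/1$ capacities) the green flows can be realized simultaneously; the same holds for blue, and green/blue overlap within a tree is allowed. The cost of the installation is $\sum_{(g,b)\in\mathcal M}\textrm{cost}(\text{tree}(g,b))=w(\mathcal M)$. Since $\mathcal M$ is a minimum-weight perfect matching in $\mathcal H$, it suffices to exhibit \emph{some} perfect matching of weight at most $\tfrac32\,\textrm{OPT}$, and I will use the matching $\mathcal M^{*}$ given by the fresh pairs of Lemma~\ref{lem:splitpartition}. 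Thus I only need to bound $\sum_{(v,w)\text{ fresh}}\textrm{St}(v,w,r)$, where $\textrm{St}(\cdot)$ denotes the optimal $3$-terminal Steiner tree cost.

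For each fresh pair $(v,w)$, with $v,w$ the two endpoints of an $\S$-alternating path $\pi$ in $G^{\S}$, I will build an explicit tree on $\{v,w,r\}$. The construction traces $\pi$ in $G$: between two consecutive split-vertices of $\pi$ we follow the private path-segment indexed by the corresponding edge of $G^{\S}$ (a segment of some $P_g$ or $P_b$ in the optimal routing), and at each split we pass straight through if its two owners traverse it with the same orientation (\emph{thin}) and walk along it if they traverse it with opposite orientations (\emph{wide}); by Lemma~\ref{lem:sp} each split is a shortest path, so this re-routing is legal and produces a $v$--$w$ walk in $G$ whose length is the total length of the private segments appearing on $\pi$ plus the total length of the wide splits on $\pi$. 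Appending to an endpoint the cheaper of a shortest $v$--$r$ path and a shortest $w$--$r$ path — each at most $\textrm{cost}(P_v)$, resp.\ $\textrm{cost}(P_w)$ — yields a feasible Steiner tree, giving a per-pair upper bound on $\textrm{St}(v,w,r)$.

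Now I sum over the fresh pairs, which is the technical heart. I will use: (i) $\sum_{g}\textrm{cost}(P_g)+\sum_{b}\textrm{cost}(P_b)=\textrm{OPT}+\Sigma$, where $\Sigma$ is the total cost of shared (split) edges, together with $\Sigma\le\textrm{OPT}$; (ii) $\textrm{OPT}\ge\sum_{v\in C_G}d(v,r)$ and $\textrm{OPT}\ge\sum_{w\in C_B}d(w,r)$, since in any solution feasible for one color the (edge-disjoint) routing paths already cost at least the sum of the distances; and (iii) since the $\S$-alternating paths are edge-disjoint in $G^{\S}$, the private segments charged by distinct fresh pairs are pairwise edge-disjoint in $G$, and each split-vertex is visited by at most two of the alternating paths. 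Because each green (resp.\ blue) terminal is the green (resp.\ blue) endpoint of exactly one fresh pair, the "shortest-path-to-$r$" terms sum to at most $\tfrac12\!\sum_v d(v,r)+\tfrac12\!\sum_w d(w,r)\le\textrm{OPT}$, while the private-segment terms sum to at most $(W_G-\Sigma)+(W_B-\Sigma)=\textrm{OPT}-\Sigma$ (writing $W_G,W_B$ for the two color sums), so a first accounting gives a total of roughly $2\,\textrm{OPT}-\Sigma_{\mathrm{thin}}+\Sigma_{\mathrm{wide}}$, where $\Sigma_{\mathrm{thin}},\Sigma_{\mathrm{wide}}$ split $\Sigma$ by type.

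I expect the main obstacle to be exactly the leftover term $\Sigma_{\mathrm{wide}}$: the naive bound above already yields $\tfrac32\,\textrm{OPT}$ when thin splits dominate, but not when wide splits are abundant, so wide splits need a separate argument. The idea I would push is that a wide split $(u,v,g,b)$ is \emph{wasteful} in the optimal routing — its two owners cross it in opposite directions — so it can be bypassed: $g$ can be re-attached to $r$ through $u$ and $b$ through $v$ (or we match terminals so as to route around wide splits in the first place), which removes a $\Theta(\Sigma_{\mathrm{wide}})$ contribution and restores the $\tfrac32$ budget. Making this quantitative — checking that the bypass re-routing stays color-feasible and that the saving is charged consistently with the matching bound — is the delicate part, and is presumably where the "deeper understanding of the structure of the optimal solution" promised in the introduction is really needed.
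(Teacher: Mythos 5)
Your proposal correctly sets up the framework (it suffices to exhibit one perfect matching of weight $\le \tfrac32\,\mathrm{OPT}$), correctly reconstructs the fresh-pair matching and its cost accounting, and correctly identifies the wide-split term as the obstruction: the fresh-pair matching alone gives roughly $\tfrac32\,\mathrm{OPT} - \Theta(w_t) + \Theta(w_d)$ (in the paper's notation $w_t$ for thin-split cost, $w_d$ for wide-split cost), which fails when wide splits dominate. But the proposal then stalls at exactly the point where the paper's proof has its central idea, and the fix you gesture at is not how the paper resolves it.

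The missing ingredient is a \emph{second} perfect matching whose cost bound is complementary to the fresh-pair bound, followed by averaging. The paper observes that if you take the optimal routing and re-route flow to cancel every wide split (legal precisely because the two partners traverse a wide split in opposite directions), you obtain routing paths in which no wide-split edge is used at all, but thin-split edges are now used by two same-color terminals and hence charged twice. The resulting per-terminal paths give, for \emph{any} pairing $(g,b)$, a Steiner tree on $\{g,b,r\}$ of cost at most the sum of the two re-routed path costs, so there is a matching in $\mathcal{H}$ of weight at most $w_b + w_g + 2w_t + 0\cdot w_d$. Averaging this with the fresh-pair bound $\tfrac32 w_b + \tfrac32 w_g + w_t + 3w_d$ gives every coefficient at most $\tfrac32$, and hence the minimum-weight matching $\mathcal M$ has weight at most $\tfrac32\,\mathrm{OPT}$. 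Your "bypass wide splits" intuition is the right physical observation (wide splits are cancelable), but you try to apply it as a local patch inside the fresh-pair matching, where the charging does not close; the paper instead packages it as an entirely separate matching whose cost profile is strong exactly where the fresh-pair matching is weak ($w_d$ vs.\ $w_t$), and the two bounds are combined by averaging. Without that second matching and the averaging step, the proof is genuinely incomplete.

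A smaller point: your per-pair bound appends the cheaper of the two shortest root-paths and uses $\mathrm{OPT} \ge \sum_v d(v,r)$; the paper instead charges the attach-to-root step via the optimal routing paths $P_g$ or $P_b$ (accounting $w_b + w_d + w_t$ resp.\ $w_g + w_d + w_t$), which is what lets the $w_b,w_g$ coefficients come out as $\tfrac32$ rather than $2$ in the fresh-pair bound. This is not fatal to your outline since both bounds are still above $\tfrac32\,\mathrm{OPT}$ in the bad regime, but it makes the subsequent averaging tighter on the paper's side.
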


\begin{proof}
First, we partition $OPT$ into four parts; let $w_b$ (and $w_g$ respectively) be the cost of the
edges which are used only by blue (green respectively) terminals in $OPT$, and let $w_t$
($w_d$) be the cost of edges in thin (wide) splits in $OPT$. Thus,
$w(OPT) = w_b + w_g + w_t + w_d$.
By Lemma~\ref{lem:splitpartition}, we can extract from the Split Graph associated to $OPT$ a set of
$\S$-alternating paths such that each terminal is contained in exactly one fresh pair.
Consider the matching $\mathcal M_1$ determined by the set of fresh pairs found by the
aforementioned procedure.
We will now bound the weight of $\mathcal M_1$.
\begin{claim}
    The weight of the matching formed by connecting the fresh pairs
    is at most
    $$\frac32 \cdot w_b + \frac32\cdot w_g + 1\cdot w_t+ 3\cdot w_d.$$
\end{claim}

\begin{proof}
Let $(g,b)$ be a fresh pair and $(g,s_1,\dots,s_h,b)$ be
the corresponding $\S$-alternating sequence. The edges of the associated
$\S$-alternating path naturally correspond to paths in $G$ composed by
non-shared edges (that connect either the endpoints of two different splits, or
one terminal and one endpoint of a split).  These paths together with the edges
of the wide splits in the sequence, naturally yield a path $P(b,g)$ in $G$
connecting $g$ and $b$.

If we do this for all fresh pairs, we obtain that the total cost of the paths
$P(b,g)$ is upper bounded by $1\cdot w_b + 1 \cdot w_g + 2 \cdot w_d$. The
reason for having a coefficient of 2 in front of $w_d$ is because the
$\S$-alternating paths of Lemma~\ref{lem:splitpartition} are edge-disjoint, but
not necessarily vertex-disjoint: however, since each split-vertex has at most 4
edges incident into it, it can be part of at most 2 $\S$-alternating paths.

Using the aforementioned connection, we can move all terminals in $C_G$ to their
partners in $C_B$. Subsequently, we connect them to the root using the $P_b$ for all $b \in C_B$.
This connection to the root will incur a cost of $1 \cdot w_b + 1\cdot w_d + 1\cdot w_t$.
Combining this together, we get a total cost of $2\cdot w_b +1\cdot w_g + 1\cdot w_t+ 3\cdot w_d$.
Analogously, if we connect the partners in $C_G$ to the root using the the path
$P_g$ for all $g \in C_G$, we will incur a total cost of
$1\cdot w_b +2\cdot w_g + 1\cdot w_t+ 3\cdot w_d$.
Since the sum of the cost of the Steiner trees connecting the fresh pairs to the root is no more
than either of these two values, we can bound the weight of $\mathcal M_1$ by their average:
$$\frac32 \cdot w_b + \frac32\cdot w_g + 1\cdot w_t+ 3\cdot w_d. $$
\end{proof}

\begin{claim}
    There exists a matching in $\mathcal{H}$ of weight at most $1\cdot w_b +1\cdot w_g + 2\cdot w_t$.
\end{claim}
\begin{proof}
    Consider the flow routed on the optimal paths by the set of all terminals $C_G \cup C_B$.
    We modify the flow (and the corresponding routing) as follows. Whenever two terminals traverse a wide-split, re-route the flows so
    as to not use the wide-split. This is always possible as the two terminals traverse
    these edges in opposite directions (by definition of wide splits). This re-routing
    ensures that all the edges of wide-splits are not used anymore in the resulting paths. However, thin-splits
    which contained terminals of different colors passing in the same direction, might now contain
    two terminals of the same color passing through the edges. This means that these edges
    will be used twice (or must have twice the capacity installed). All other edges do not need to
    have their capacity changed.  Thus, the resulting
    flow can be associated with a feasible solution of cost at most $1\cdot w_b +1\cdot w_g + 2\cdot w_t + 0\cdot w_d$. This flow
    corresponds to all vertices directly connecting to the root as any shared edge is counted twice.
    Hence, this is a bound on any matching in $\mathcal{H}$.
\end{proof}

\noindent The average weight of the above matchings is an upper-bound on the minimum weight of a matching in $\mathcal{H}$. Hence, the weight of $\mathcal M$ is at most
		\begin{align*}
            & \frac{1}{2}\cdot \big( \frac32\cdot w_b + \frac32\cdot w_g + 1\cdot w_t + 3\cdot w_d \big) +
            \frac{1}{2}\cdot \big( 1\cdot w_b +1\cdot w_g + 2\cdot w_t + 0\cdot w_d \big)  \\
            &\leq\frac{3}{2}\cdot \big( w_b + w_g +  w_t +  w_d)
		\end{align*}
    Therefore, the matching algorithm is $\frac{3}{2}$-approximation algorithm.
\end{proof}

\section{Hardness for two colors}

We prove that the SAND problem is NP-hard even with just two colors.
\begin{theorem}
The SAND problem with $2$ colors is NP-hard.
\end{theorem}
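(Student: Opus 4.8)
The plan is to give a polynomial-time reduction from \textsf{SAT}. Given a CNF formula $\varphi$ over variables $x_1,\dots,x_n$ with clauses $c_1,\dots,c_m$, I would build a \sand{} instance with two colors (green and blue), a root $r$, and a threshold $T=T(\varphi)$, and argue that $\varphi$ is satisfiable if and only if the optimal \sand{} cost is at most $T$. The guiding intuition is the matching picture behind the $\tfrac32$-approximation: the only way to beat the ``route every terminal along its own shortest path'' solution is to let a green terminal and a blue terminal share a common subpath, so a minimum-cost solution is essentially forced to pair up terminals coherently. I would therefore design \emph{variable gadgets} in which a designated terminal has exactly two equal-cost routes toward $r$ — a ``true route'' and a ``false route'' (one concrete realization is a small cycle hung toward the root, where the common orientation chosen by a green–blue pair traversing it, clockwise versus counterclockwise as in Figure~\ref{fig1}, encodes the truth value) — and \emph{clause gadgets} in which a dedicated clause-terminal of the opposite color can realize its intended cost saving only by sharing a segment with some variable gadget that is in the route corresponding to a literal of that clause being true.

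Concretely, the steps are: (1) \textbf{Variable gadget.} For each $x_i$ introduce a green terminal $g_i$ together with two internally disjoint, equal-length subpaths $\pi_i^{\mathrm{T}},\pi_i^{\mathrm{F}}$ leading toward $r$; routing $g_i$ along $\pi_i^{\mathrm{T}}$ (resp. $\pi_i^{\mathrm{F}}$) encodes $x_i=\textsc{true}$ (resp. \textsc{false}). To make the choice genuinely exclusive in any cheap solution I would replace each gadget edge by many parallel copies (as permitted by the simplifying assumptions), forcing $0/1$ capacities so a terminal uses at most one copy, and scale costs so $g_i$ never gains by splitting its flow across both routes. (2) \textbf{Clause gadget.} For each clause $c_j$ introduce a blue terminal $b_j$ whose unshared shortest route to $r$ is comparatively expensive, but which admits a cheaper route precisely when it can piggyback onto $\pi_i^{\mathrm{T}}$ or $\pi_i^{\mathrm{F}}$ for one of the literals of $c_j$ in its literal-true orientation. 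Pad with distance-$0$ dummy terminals so that $|C_G|=|C_B|$, again using the simplifying assumptions. (3) \textbf{Threshold.} Set $T$ to be the cost of the ``all shortest paths, no sharing'' routing minus the total saving obtained when every clause-terminal successfully piggybacks; this saving is simultaneously attainable exactly when the variable gadgets can be oriented so that every clause has a satisfied literal.

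For the correctness argument, the \emph{soundness} direction is routine: a satisfying assignment orients each $\pi_i$ accordingly and routes each $b_j$ through a satisfied literal's gadget, giving cost exactly $T$. The \emph{completeness} direction is where I expect the main difficulty: I must show that \emph{any} feasible solution of cost $\le T$ realizes every planned saving, and hence induces a consistent assignment satisfying all clauses. The obstacle is that \sand{} solutions need not look like a matching — sharing can occur along long splits, a split can be wide or thin, and (as the analysis of the algorithm shows) a split-vertex can lie on two alternating structures — so the gadgets must be rigid enough that none of these richer behaviours yields an unintended saving. The core technical work is a case analysis leaning on Lemma~\ref{lem:sp} (every split is a shortest path between its endpoints) together with the parallel-edge and cost-scaling tricks, showing that in any budget-meeting solution each variable terminal is routed entirely along one of its two canonical paths and each clause-terminal shares with exactly one such path in a literal-true orientation; a satisfying assignment is then read off directly. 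I would also need to verify that a single variable gadget cannot be ``reused'' to account for more clauses than the intended saving allows, and that a clause-terminal cannot cheat by sharing with another clause-terminal or by stitching together fragments of several variable paths.
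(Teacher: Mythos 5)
Your proposal takes the same broad approach as the paper — a polynomial reduction from SAT in which variable gadgets and clause gadgets are engineered so that a tight budget forces every clause to share capacity with a variable gadget in a literal-true orientation — but it misses the one structural trick that makes the paper's soundness direction tractable, and that trick is aimed precisely at the obstacles you flag at the end of your sketch.

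In the paper, each clause node $k_j$ is a terminal of \emph{both} color classes, and every edge incident to $k_j$ carries a huge cost $M$ (one such edge per literal occurrence of the clause, landing on a ``hook'' vertex $y_i^2$, $y_i^4$, or $y_i^6$ inside the corresponding variable gadget). A budget of $(M+2)m + 8p$ then forces the support to contain exactly one $M$-edge per clause node, so the $C_1$-copy and the $C_2$-copy of $k_j$ \emph{must} share the same first edge. That single shared edge simultaneously commits $k_j$ to a unique variable gadget and determines the truth value (\textsc{true} if it hits $y_i^2$ or $y_i^6$, \textsc{false} if it hits $y_i^4$); from there the two copies of $k_j$ fan out in opposite directions inside a $7$-node path gadget whose four $y$-terminals (two in each color) always route directly to $r$. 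The remaining soundness argument is a short local case analysis (Claims~\ref{clm:nodouble} and~\ref{clm:second}) showing that a variable ``in conflict'' strictly overspends inside its own gadget. Your design instead colors all variable terminals green and all clause terminals blue. With that split, nothing forces a blue clause terminal to commit to a unique variable gadget in a unique orientation; the ``richer behaviours'' you list — long splits, split-vertices lying on two alternating structures, stitching together fragments of several variable paths, one gadget being reused by several clauses — are exactly what the double-colored clause node rules out at a stroke, and it is not clear your scheme can rule them out by hand. Also note that two blue clause terminals cannot share capacity with each other, so if two clauses both wish to piggyback on the same ``true route'' of one variable only one of them can save, which breaks the threshold count unless you give each occurrence its own hook.

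Two secondary differences are also worth noting. The paper reduces from the \NPhard{} at-most-3-occurrences restriction of SAT (with exactly one negated occurrence per variable) precisely so each variable gadget has a bounded, fixed set of hook vertices; your reduction from unrestricted SAT would need a similar normalization. And the paper does not encode truth through the route choice of a dedicated variable terminal: the variable gadget has no single ``$g_i$'' choosing between $\pi_i^{\mathrm T}$ and $\pi_i^{\mathrm F}$, and truth is read off instead from the landing point of the clause node's forced $M$-edge. That choice of encoding is what localizes the consistency check to one gadget at a time and keeps the case analysis finite.
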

\begin{proof}
We use a reduction from a variant of the Satisfiability (SAT) problem, where
each variable can appear in at most 3 clauses, that is known to be NP-hard
\cite{Y78}.  Formally, in a SAT instance we are given $m$ clauses $K_1, \dots,
K_m$, and $p$ variables $x_1, \dots, x_p$.  Each clause $K_j$ is a disjunction
of some \emph{literals}, where a literal is either a variable $x_i$ or its
negation $\bar x_i$, for some $i$ in $1, \dots, p$.  The goal is to find a
truth assignment for the variables that satisfies all clauses, where a clause
is satisfied if at least one of its literals takes value \emph{true}. In the
instances under consideration, each variable $x_i$ appears in at most 3
clauses, either as a literal $x_i$, or as a literal $\bar x_i$.  It is not
difficult to see that, without loss of generality, we can assume that every
variable appears in exactly 3 clauses. Furthermore, by possibly replacing all
occurrences of $x_j$ with $\bar x_j$ and vice versa, we can assume that each
variable $x_i$ appears in exactly one clause in its negated form ($\bar x_i$).

Given such a SAT instance, we define an instance of SAND as follows (see Fig. \ref{fig:reduction}). We construct a graph $G=(V,E)$ by introducing one sink node $r$, one node $k_j$ for each clause $K_j$, and 7 distinct nodes $y^\ell_i$, ($\ell=1,\dots, 7)$, for each variable $x_i$.
That is,
$$V:=  \{r\} \cup \{ k_1, \dots, k_m\} \cup \left\{\bigcup_{i=1}^p \{y^1_i, y^2_i, y^3_i, y^4_i, y^5_i, y^6_i, y^7_i\}\right\}   $$

The set of edges $E$ is the disjoint union of three different sets, $E:=E_1 \cup E_2 \cup E_3$, where:
$$E_1 := \bigcup_{i=1}^p \left\{  \bigcup_{\ell =1}^{4} \{r,y_i^{2\ell -1}\} \right\}; \quad E_2 :=  \bigcup_{i=1}^p \left\{ \bigcup_{\ell =1}^{6} \{y_i^\ell, y_i^{\ell +1} \} \right\}.$$

To define the set $E_3$, we need to introduce some more notation. For a variable $x_i$, we let $i_1$ and $i_2$ be the two indices of the clauses containing the literal $x_i$, and we let $i_3$ be the index of the clause containing the literal $\bar x_i$. We then have
$$E_3 := \bigcup_{i=1}^p   \left\{ \{y_i^2, k_{i_1}\}, \{y_i^4, k_{i_3}\}, \{y_i^6, k_{i_2}\}   \right\}.$$
We assign cost 2 to the edges in $E_1$, unit cost to the edges in $E_2$, and a big cost $M >> 0$ to the edges of $E_3$ (in particular, $M > 2m + 8p$).
Finally, we let the color classes\footnote{We here have $C_1 \cap C_2 \neq \emptyset$. However, the reduction can be easily modified to prove hardness of instances where $C_1 \cap C_2 = \emptyset$, by simply adding for all $j$ two nodes $k_j^1, k_j^2$ adjacent to $k_j$ with an edge of zero cost, and by letting $k_j^1$ (resp. $k_j^2$) be in $C_1$ (resp. $C_2$) instead of $k_j$.} be defined as:
$$C_1 := \{k_1, \dots, k_m\} \cup \left\{ \bigcup_{i=1}^p \{y_i^1, y_i^5\}\right\}; \quad C_2 :=  \{k_1, \dots, k_m\} \cup \left\{ \bigcup_{i=1}^p \{y_i^3, y_i^7\}\right\}.$$
\vspace*{-.5cm}
\begin{figure}
\centering
 \includegraphics[height=4cm]{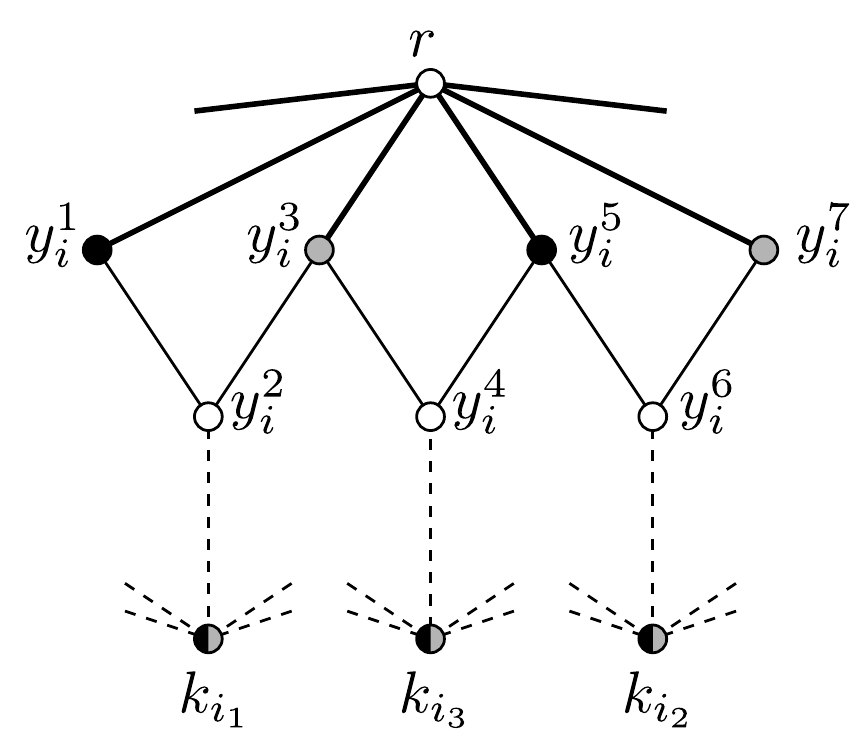}
 \caption{The picture shows the subgraph introduced for every variable $x_i$. Bold edges have cost 2, solid edges have cost 1, and dashed edges have cost $M$. Black circles indicate nodes in $C_1$, and grey circles indicate nodes in $C_2$. Nodes in $C_1 \cap C_2$ are colored half-black and half-grey.}
 \label{fig:reduction}
\end{figure}

We claim that there exists an optimal solution to the SAND instance of cost at most $(M+2)m + 8p$ if and only if there is a truth assignment satisfying all clauses for the SAT instance.

\subsection{Completeness}
 First, let us assume that the SAT instance is satisfiable. For each clause $K_j$, we select one literal that is set to \emph{true} in the truth assignment.
We define the paths for our terminal nodes in $C_1$ as follows.
For each node $y \in \bigcup_{i=1}^p \{y_i^1, y_i^5\}$, we let the flow travel from $y$ to $r$ along the edge $\{y,r\}$.
For each $k_j$, we let the flow travel to $r$ on a path $P_1^j$, that we define based on the literal selected for $K_j$.
Specifically, let $x_i$ be the variable corresponding to the literal selected for the clause $K_j$. Then:
\begin{itemize}
\item if $K_j =K_{i_1}$, we let $P_1^j$ be the path with nodes $\{k_j, y_i^2, y_i^3, r\}$,
\item if $K_j =K_{i_2}$, we let $P_1^j$ be the path with nodes $\{k_j, y_i^6, y_i^7, r\}$,
\item if $K_j =K_{i_3}$, we let $P_1^j$ be the path with nodes $\{k_j, y_i^4, y_i^3, r\}$.
\end{itemize}
We define the paths for our terminal nodes in $C_2$ similarly.
For each node $y \in \bigcup_{i=1}^p \{y_i^3, y_i^7\}$, we let the flow travel from $y$ to $r$ along the edge $\{y,r\}$. For each $k_j$,  we let the flow travel to $r$ on a path $P_2^j$ defined as follows. Let $x_i$ be the variable corresponding to the literal selected for the clause $K_j$. Then:
\begin{itemize}
\item if $K_j =K_{i_1}$, we let $P_2^j$ be the path with nodes $\{k_j, y_i^2, y_i^1, r\}$,
\item if $K_j =K_{i_2}$, we let $P_2^j$ be the path with nodes $\{k_j, y_i^6, y_i^5, r\}$,
\item if $K_j =K_{i_3}$, we let $P_2^j$ be the path with nodes $\{k_j, y_i^4, y_i^5, r\}$.
\end{itemize}

Note that the paths of terminals belonging to the same color set do not share edges.
In fact, by construction, the paths of two terminals in $C_1$ could possibly share an edge only if
for two distinct clauses $K_j \neq K_{j'}$ we selected a literal corresponding to the same variable $x_i$, and we have
$K_j =K_{i_1}$ and $K_{j'} =K_{i_3}$, since in this case the paths $P_1^j$ and $P_1^{j'}$ would share the edge $\{y_i^3,r\}$. However, selecting $x_i$ for $K_{i_1}$ means $x_i$ takes value \emph{true} in the truth assignment, while
selecting $x_i$ for $K_{i_3}$ means $x_i$ takes value \emph{false} in the truth assignment, which is clearly a contradiction.
A similar observation applies to paths of terminals in $C_2$. It follows that installing one unit of capacity on every edge that appears in (at least) one selected path is enough to support the flow of both color sets.
The total installation cost is exactly $8p + (M+2)m$.

\subsection{Soundness} Suppose there is an optimal solution to the SAND instance of cost at most $(M+2)m + 8p$. Let $\mathcal S$ denote such solution. Since the support of any feasible solution has to include at least one distinct edge of cost $M$ for each node $k_j$, and $M > 2m + 8p$, it follows that $\mathcal S$ has exactly $m$ edges of cost $M$ in its support, each with one unit of capacity installed. Hence, if we denote by $P_1^j$ (resp. $P_2^j$) the path used by $k_j$ to send flow to $r$ with terminals in $C_1$ (resp. $C_2$), we have the following fact.

\smallskip
\noindent
\emph{Fact 1}.
For each $j=1, \dots, m$, the paths $P_1^j$ and $P_2^j$ from $k_j$ to $r$ share the first edge.

\smallskip
We use this insight to construct a truth assignment for the SAT variables. Specifically,
let $y_i^{\ell}$ be the endpoint of the first edge of $P_1^j$ and $P_2^j$.
We set $x_i$ to \emph{true} if $y_i^{\ell} = y_i^{2}$ or if  $y_i^{\ell} = y_i^{6}$, and we set $x_i$ to \emph{false} if $y_i^{\ell} = y_i^{4}$.
We repeat this for all clauses $j=1, \dots, m$, and we assign an arbitrary truth value to all remaining variables, if any.
In order to finish the proof, we have to show that this assignment is consistent for all $i=1, \dots, p$. To this aim,
let us say that a variable $x_i$ is \emph{in conflict} if there is a node $k_j$ sending flow to $r$ on a path whose first edge has endpoint $y_i^4$, and there is node $k_{j'} \neq k_j$ sending flow to $r$ on a path whose first edge has endpoint $y_i^2$ or $y_i^6$. Note that our assignment procedure is consistent and yields indeed a valid truth assignment if and only if there is no variable in conflict.

We now make a few claims on the structure of $\mathcal S$, that will be useful to show that no variable can be in conflict. Next fact follows from basic flow theory.

\smallskip
\noindent
\emph{Fact 2.} Without loss of generality, we can assume that the flow sent from terminals in $C_1$ (resp. $C_2$) to $r$, does not induce directed cycles.

\smallskip
\begin{claim}
    \label{clm:nodouble}
Without loss of generality, we can assume that every terminal
sends flow to $r$ on a path that contains exactly one node $y \in \bigcup_{i=1}^p\{y_i^1,y_i^3,y_i^5,y_i^7\}$.
\end{claim}
We defer the proof of this claim which is central to the remaining proof to the end.
Let $G_i$ be the subgraph of $G$ induced by the nodes $\{r, y_i^1, \dots, y_i^7\}$,
and let $\chi_i$ be the total cost of the capacity that $\mathcal S$ installs on the subgraph $G_i$.  Note that, by Fact 1, the cost of $\mathcal S$ is $m\cdot M + \sum_{i=1}^m \chi_i$.
We will use Claim 1 to give a bound on the value $\chi_i$. To this aim, let $n_i$ be the number of nodes $k_j$ whose path $P_1^j$ contains edges of $G_i$. Note that $0 \leq n_i \leq 3$, and each $k_j$ contributes to exactly one $n_i$, for some $i=1, \dots, p$.

\begin{claim}
    \label{clm:second}
    We have $\chi_i \geq 8 + 2n_i$, with the inequality being strict if the variable $x_i$ is in conflict.
\end{claim}

Claim~\ref{clm:second}  finishes our proof, since it implies that the cost of $\mathcal S$ is at least
\begin{align*}
    m\cdot M + \sum_{i=1}^p \chi_i \geq m\cdot M + \sum_{i=1}^p (8+2n_i) = m\cdot M +8p+2m,
\end{align*}
with the inequality being tight if and only if there is no variable in conflict.
\end{proof}

\section{Latency SAND}

As described in Section~\ref{sec:obstacles}, there is a $\Omega(\log n)$ gap
between the tree and graph version of \sand{}. This naturally raises the
question of approximating \sand{} when the solution must be restricted to
different topologies. In this section, we consider the \sand{} when the output
topology must be a path. Since this variant of \sand is not easy to solve on a tree,
it is not clear how to solve it using tree metrics.

\begin{definition}
    In the \emph{latency-\sand} problem, we are given an instance of \sand, but
    require the output to be a path with the
    root $r$ as one of its endpoints. Our goal is output a minimum cost
    path, where the cost of an edge is $w_e \cdot$ (load on $e$). The load on an edge is the maximum number of nodes of one color it separates from the root.
\end{definition}

We assume that the lengths are integers and polynomially bounded in the input and give a
time-indexed length formulation for this problem. This linear programming
formulation was introduced by Chakrabarty and Swamy~\cite{CS16} for
orienteering problems.

\textbf{The Linear Programming Formulation for Latency-\sand}
\begin{alignat}{3}
    \min & \quad & \sum_{j,t} t\cdot x_{j,t} & \tag{LP$_{\mathcal{P}}^{b}$} \label{lpb} \\
    \text{s.t.} && \sum_t x_{j,t} & \geq 1 \qquad && \forall j  \in [m]  \label{jasgn} \\
    && \sum_{P\in\mathcal{P}_{b \cdot t}} z_{P,t} & \leq 1 \qquad && \forall t \in [T] \label{onep} \\
    && \sum_{P\in\mathcal{P}_{b \cdot t}: j\in P} z_{P,t} & \geq \sum_{t'\leq t}x_{j,t'} \qquad && \forall j \in [m],t \in [T] \label{jcov} \\
    && x, z & \geq 0 \notag
\end{alignat}

We assume without loss of generality, that $|C_i| = m$ for all $i\in [k]$.
$\mathcal{P}_t$ denotes the set of paths of weight at most $t$ starting from
the root. Since the lengths are polynomially bounded, we can contain a variable for each
possible length (we denote $T$ to be the maximum possible length). We use $j \in P_t$ to indicate that the path $P_t$ contains $j$
terminals of each color. The variable $x_{j,t}$ indicates that we have seen $j$
terminals of each color by time $t$ and $z_{P,t}$ indicates that we use path
$P$ to visit the terminals at time $t$.

\begin{lemma}
The linear program~\ref{lpb} is a relaxation of Latency-\sand for $b \geq 1$.
\end{lemma}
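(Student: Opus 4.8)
The plan is to exhibit, for any feasible solution to Latency-\sand{}, a corresponding feasible fractional solution to \ref{lpb} of no greater objective value. Let $\pi$ be an optimal path for the Latency-\sand{} instance, with $r$ as an endpoint, and orient $\pi$ away from $r$. For each color $i\in[k]$ and each terminal in $C_i$, define its "arrival time" as the distance along $\pi$ from $r$ to the point where that terminal is separated from $r$ — equivalently, the weight of the prefix of $\pi$ ending at that terminal. Since $|C_i|=m$ for all $i$, we can speak of the $j$-th terminal of color $i$ to be reached; the key observation is that the cost $\sum_e w_e\cdot(\text{load on }e)$ equals $\sum_{i\in[k]}\sum_{j=1}^{m}(\text{arrival time of the }j\text{-th color-}i\text{ terminal})$, because an edge at prefix-distance range $(\tau,\tau+w_e]$ carrying load $\ell$ contributes $w_e\ell$, and summing over edges regroups into a sum of arrival times. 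Actually the load on $e$ is $\max_i |C_i \cap (\text{far side})|$, so I must be a little more careful: I will set $x_{j,t}$ to indicate that, \emph{for every} color, at least $j$ terminals have been separated by time $t$, i.e.\ $t$ is the maximum over colors of the $j$-th arrival time. Then $\sum_t x_{j,t}$ (as a $0/1$ indicator of "the $j$-th layer is completed by $t$", summed appropriately) needs to be set so that $\sum_t x_{j,t}\ge 1$; I will place a single unit of $x_{j,\cdot}$ mass at $t_j:=\max_i(\text{$j$-th color-$i$ arrival time})$, and the objective $\sum_{j,t}t\,x_{j,t}=\sum_j t_j$, which one checks is exactly $w(\pi)$ weighted by load — this is the first routine verification.

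Next I would set the path variables. For each $t\in[T]$, let $P(t)$ be the prefix of $\pi$ of weight exactly $t$ (truncating inside an edge is allowed since lengths are integral and we may subdivide). This prefix, being a subpath of $\pi$ of weight at most $t\le b\cdot t$ for $b\ge 1$, lies in $\mathcal{P}_{b\cdot t}$; note this is the \emph{only} place the hypothesis $b\ge1$ enters — it guarantees $\mathcal P_{b\cdot t}\supseteq \mathcal P_t$ so the prefix qualifies. Set $z_{P(t),t}=1$ and all other $z_{P,t}=0$. Then constraint~\ref{onep} holds with equality since for each $t$ exactly one path is chosen. For~\ref{jcov}: the right-hand side $\sum_{t'\le t}x_{j,t'}$ is $1$ if $t\ge t_j$ and $0$ otherwise; when $t\ge t_j$, by definition of $t_j$ every color has its $j$-th terminal within prefix-distance $t_j\le t$, so $P(t)$ contains at least $j$ terminals of each color, hence $j\in P(t)$ in the paper's notation, so the left-hand side is $1$. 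Constraint~\ref{jasgn} holds by the placement of unit mass at $t_j$. Nonnegativity is immediate.

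The main obstacle, and the only genuinely delicate point, is matching the paper's convention "$j\in P$ means $P$ contains $j$ terminals of each color" against the load definition, which takes a \emph{max} over colors rather than requiring uniformity: a prefix could contain $5$ green and $3$ blue terminals, and it is ambiguous whether "$3\in P$" or "$5\in P$" or both. I would resolve this by reading "$j\in P$" as "$P$ contains at least $j$ terminals of \emph{every} color" (monotone in $j$), which is consistent with constraint~\ref{jcov} demanding coverage of index $j$ once $x_{j,\cdot}$ mass has accumulated, and then the definition of $t_j$ as the max over colors is exactly what makes $P(t_j)$ contain $\ge j$ of every color. With that reading fixed, the verification above goes through and the objective value of the constructed solution equals the Latency-\sand{} cost of $\pi$; hence \ref{lpb} is a relaxation. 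I expect the write-up to spend most of its words pinning down this indexing/notation issue and the prefix-truncation step, with everything else being a one-line check of each constraint.
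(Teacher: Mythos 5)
Your proposal is correct and follows the same route as the paper's (much terser) proof: given a feasible Latency-\sand{} path, exhibit a feasible solution to the LP of equal objective value by placing unit $x$-mass at the layer-completion times $t_j=\max_i(\text{$j$-th color-$i$ arrival time})$ and using prefixes of the path as the $z$-supported paths, with $b\ge1$ entering only to ensure $\mathcal{P}_t\subseteq\mathcal{P}_{b\cdot t}$. You make the construction and the telescoping-by-load-level accounting explicit where the paper only gestures at them, but the underlying argument is identical.
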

\begin{proof}
    We show that the contraints and objective are valid for any feasible solution to Latency-\sand.
\begin{itemize}
    \item Constraint~\ref{jasgn} ensures that $j$ terminals of each color are covered at some given time period, for every $j \in [m]$.
    \item Constraint~\ref{onep} ensures that only one path is (fractionally) picked for each time period $t$.
    \item Constraint~\ref{jcov} indicates that we must have picked a path $P$ that covers $j$ terminals by time $t$ if $\sum_{t' \leq t} x_{j,t'} =1$.
    \item The objective function correctly captures the cost of the path.  For
        an integer solution, $x_{j,t}=1$ indicates that time $t$ is the first
        time $j$ terminals of each color are present in the path.  Thus the
        objective counts the prefix length $t^1$ corresponding to where
        $x_{1,t^1}=1$ in all $m$ of the terms, the next prefix of length $t^2 -
        t^1$ in $m-1$ of them and so on. This accurately accounts for the loads
        in these segments of the path according to the objective function in
        \sand{}. Finally, $b \geq 1$ only allows the paths to be of lengths
        longer by a factor of $b$ so keeps the optimal solution feasible.
\end{itemize}
\end{proof}

First, we can relax the above LP by replacing $\mathcal{P}_t$ with
$\mathcal{T}_t$ which is the set of all trees of size at most $t$. This is a
relaxation as $\mathcal{P}_t\subseteq \mathcal{T}_t$.
Lemma~\ref{lem:round}, shows that we can round LP$_{\mathcal{T}}^b$ to get a
$O(b)$ approximation to latency-\sand.

\begin{lemma}
\label{lem:round}
    Given a fractional solution $(x,z)$ to LP$_{\mathcal{T}}^b$, we can round it to a
    solution to latency-\sand{} with cost at most $O(b)$ times the cost of LP$_{\mathcal{T}}^b$.
\end{lemma}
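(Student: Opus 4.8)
The plan is to adapt the geometric‑scaling ``concatenate greedy trees'' paradigm from the minimum‑latency literature (in the spirit of the time‑indexed rounding of~\cite{CS16}) to the relaxed tree LP. First I would bucket time into scales $2^{0},2^{1},\dots,2^{\lceil\log T\rceil}$ and, for each scale $2^{i}$, define $G_{i}\subseteq[m]$ as the set of indices $j$ with $\sum_{t'\le 2^{i}}x_{j,t'}\ge\tfrac12$. One may assume without loss of generality that $\sum_{t'\le t}x_{j,t'}$ is nonincreasing in $j$ (shifting mass of higher indices forward in time only helps, since a tree with $\ge j$ terminals of every color also has $\ge j-1$ of every color), so each $G_{i}$ is a prefix and the sets are nested. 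Applying Markov's inequality to constraint~\eqref{jasgn}, an index $j$ enters $G_{i}$ once $2^{i}$ exceeds a constant times its fractional completion time $\phi_{j}:=\sum_{t}t\,x_{j,t}$; in particular $\sum_{i}2^{i}\,|G_{i}\setminus G_{i-1}|=O\bigl(\sum_{j}\phi_{j}\bigr)=O\bigl(\mathrm{cost}(\mathrm{LP}_{\mathcal{T}}^{b})\bigr)$, and likewise $\sum_{i}2^{i}\,|[m]\setminus G_{i}|=O\bigl(\mathrm{cost}(\mathrm{LP}_{\mathcal{T}}^{b})\bigr)$.

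Next I would show that at each scale there is a cheap tree covering a constant fraction of any still‑uncovered due indices. Fix a scale $i$ and a subset $A\subseteq G_{i}$. Constraint~\eqref{jcov} at $t=2^{i}$ says every $j\in A$ is covered to extent $\ge\tfrac12$ by the distribution $z_{\cdot,2^{i}}$ over trees of weight at most $b\,2^{i}$, while~\eqref{onep} says this distribution has total mass at most $1$; averaging over it produces a single tree of weight at most $b\,2^{i}$ containing at least $|A|/2$ indices of $A$, i.e.\ (since the covered indices form a prefix) containing at least $|A|/2$ terminals of \emph{every} color. This tree is not explicit, but feeding the target ``$\ge|A|/2$ terminals of each color, weight $\le b\,2^{i}$'' to an approximation algorithm for the rooted quota / $k$-MST–type Steiner problem yields an explicit tree of weight $O(b\,2^{i})$ containing $\Omega(|A|)$ terminals of every color.

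Then, processing scales $i=0,1,2,\dots$ in order, let $A_{i}$ be the indices of $G_{i}$ not yet covered when scale $i$ begins, and apply the previous step a \emph{constant} number of times (each time to the residual uncovered part), producing $O(1)$ trees of total weight $O(b\,2^{i})$ that together cover all but a $\tfrac18$‑fraction of $A_{i}$; mark those terminals covered. Taking an Euler tour of each tree (doubling its weight, then shortcutting to a path) and chaining the resulting segments through $r$ yields a single $r$-rooted path whose prefix through the end of scale $i$ has length $L_{i}=O(b\,2^{i})$, the geometric growth being what keeps this bounded. Since the covered indices always form a prefix $\{1,\dots,m^{(i)}\}$, the load on any edge lying beyond scale $i$'s segment is at most $m-m^{(i)}$, so the latency-\sand{} objective $\sum_{e}w_{e}\cdot(\text{load})$ telescopes to $O(b)\sum_{i}2^{i}\,\bigl(m-m^{(i-1)}\bigr)$. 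The ``not‑yet‑due'' part of $m-m^{(i-1)}$ sums to $O(\mathrm{cost}(\mathrm{LP}_{\mathcal{T}}^{b}))$ by the first step, and the ``due‑but‑uncovered'' part contracts by a factor $<\tfrac12$ per scale thanks to the constant‑fold over‑covering, so it too sums to $O\bigl(\sum_{i}2^{i}|G_{i}\setminus G_{i-1}|\bigr)=O(\mathrm{cost}(\mathrm{LP}_{\mathcal{T}}^{b}))$; altogether the cost is $O(b)\cdot\mathrm{cost}(\mathrm{LP}_{\mathcal{T}}^{b})$.

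The main obstacle is the quota‑type Steiner subroutine of the second step: we must find a bounded‑weight rooted tree simultaneously carrying a prescribed number of terminals of \emph{each} of the $k$ colors, and obtaining a genuine constant factor there (rather than a ratio growing with $k$) is the crux — this is exactly where any extra polylogarithmic factor in the overall approximation would be charged. A secondary, more technical point is the amortized accounting in the last step: one must verify that the constant‑fold over‑covering (needed to beat the $\log T$ loss of naive one‑tree‑per‑scale concatenation) raises the per‑scale cost by only a constant, and that the Euler‑tour concatenation through the root genuinely produces an $r$-rooted path whose prefix lengths stay geometric.
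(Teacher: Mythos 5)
Your geometric-scaling-and-concatenation skeleton is in the right spirit, but you have veered off into a more complicated (and, as written, incomplete) deterministic argument where the paper uses a simple randomized one, and the obstacle you flag is one you created rather than one inherent to the problem.

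The paper's proof is direct: at each scale $t=2^i$, \emph{sample} a tree from the distribution $z_{\cdot,t}$ given by the LP solution itself, Eulerify it into an $r$-rooted path of length at most $2bt$, and concatenate. Letting $t_j$ be the first time $\sum_{t'\le t_j} x_{j,t'}\ge 2/3$, constraint~\eqref{jcov} guarantees that at every later scale the sampled tree covers index $j$ with probability at least $2/3$, so the expected prefix length until $j$ is covered is a geometric sum $O(b\,t_j)$; summing over $j$ and comparing to the LP objective (each $t_j$ contributes $\Omega(t_j)$ to $\sum_{j,t} t\,x_{j,t}$ by Markov) gives the $O(b)$ bound. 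No deterministic prefix bookkeeping, no telescoping by residual sets, and crucially no separate tree-finding subroutine: the LP's $z$ variables are themselves an explicit (polynomially supported, via ellipsoid) distribution over trees, so the trees are already in hand.

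This last point is where your proposal has a real gap. You observe correctly that averaging over $z_{\cdot,2^i}$ shows some tree of weight $\le b\,2^i$ covers $\Omega(|A|)$ of the due indices, but you then declare this tree ``not explicit'' and reach for a quota/$k$-MST-type covering-Steiner oracle, acknowledging that a constant-factor algorithm for it ``is the crux.'' No such constant-factor algorithm is known (covering Steiner tree is $\Theta(\mathrm{polylog})$), so as written the argument does not close. But you did not need the oracle: the averaging argument locates a suitable tree inside the \emph{support of $z_{\cdot,2^i}$}, which has polynomial size, so you can simply enumerate and pick the best one. Had you made that observation, your deterministic version would also go through (modulo the somewhat muddled accounting of ``$|A|/2$ indices'' versus ``$|A|/2$ terminals of each color,'' which needs the prefix structure stated more carefully). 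As it stands, though, the proposal leaves its central step unresolved, whereas the paper's randomized sampling sidesteps it entirely.
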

We defer the proof to the appendix due to space constraints but briefly sketch the argument.
Roughly, we sample the trees at geometric intervals and ``eulerify'' them to produce a solution
whose cost is not too much larger than the LP-objective.

Despite, being able to round the LP, we cannot hope to solve it effeciently due
to the exponential number of variables in the primal.  We will use the dual to
obtain a solution to a relaxed version of the primal.
\begin{alignat}{3}
    \max & \quad & \sum_j\alpha_j & - \sum_t\beta_t \tag{Dual$_\mathcal{P}^{b}$} \label{dualb} \\
    \text{s.t.} && \alpha_j & \leq t+\sum_{t'\geq t}\theta_{j,t'} \qquad && \forall j,t \label{d1} \\
    && \sum_{j\in P}\theta_{j,t} & \leq \beta_t \qquad && \forall t, P\in \mathcal{P}_{bt } \label{dineq1} \\
    && \alpha, \beta, \theta & \geq 0. \label{d3}
\end{alignat}

Following~\cite{CS16} it is suffient that an ``approximate separation oracle'' in the sense of
Lemma~\ref{lem:apxseparate} is sufficient to compute an optimal solution to LP$_\mathcal{T}^b$.

\begin{lemma}
\label{lem:apxseparate}
    Given a solution $(\alpha,\beta,\theta)$, we can show that either $(\alpha,
    \beta, \theta)$ is a solution to Dual$_\mathcal{T}^1$ or find a violated
    inequality for $(\alpha,\beta,\theta)$ for Dual$_\mathcal{T}^b$ for
    $b=O(\log^2 k \log n)$.
\end{lemma}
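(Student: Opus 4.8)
The plan is to recognize the separation problem for $\text{Dual}_\mathcal{T}^b$ as a variant of the rooted tree-orienteering / minimum-excess problem, and to invoke an approximation algorithm for it as a black box. Recall that a constraint of type~\eqref{dineq1} is violated at some time $t$ precisely when there exists a tree $P$ of weight at most $bt$ containing the root with $\sum_{j\in P}\theta_{j,t} > \beta_t$. Equivalently, fixing $t$, we must decide whether the maximum of $\sum_{j\in P}\theta_{j,t}$ over all trees $P$ rooted at $r$ of weight at most $t$ exceeds $\beta_t$; if yes we should return such a tree (allowing its weight to blow up to $bt$), and if no we certify that $(\alpha,\beta,\theta)$ is feasible for $\text{Dual}_\mathcal{T}^1$. (The constraints~\eqref{d1} involve no trees and are checked directly.) So the core task is: given node-prizes $\theta_{j,t}\ge 0$ (where collecting node $j$ yields the prize for \emph{all} colors simultaneously, since a terminal $j$ present in the tree contributes $\theta_{j,t}$ once), find a rooted subtree of weight $\le t$ maximizing the collected prize, and we are content with a bicriteria guarantee that loses an $O(\log^2 k\log n)$ factor in the weight budget while matching (or nearly matching) the optimal prize.

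The key steps, in order, are as follows. First, I would reduce checking all the exponentially many constraints~\eqref{dineq1} to $T$ separate rooted prize-collecting tree problems, one per time index $t$, and note $T$ is polynomial since lengths are polynomially bounded. Second, for a fixed $t$, I would set up the rooted $k$-MST / quota-tree style subproblem: we want the minimum-weight rooted tree collecting at least a target prize $\pi$, and by binary-searching over $\pi$ (the prize is a sum of at most $m$ values $\theta_{j,t}$, so polynomially many thresholds suffice) we can detect whether the optimum prize-within-budget-$t$ exceeds $\beta_t$. Third, I would plug in the known approximation for the rooted quota/prize-collecting Steiner tree problem (Garg's $2$-approximation for $k$-MST, or a Lagrangian-based rooted variant) to get a tree whose prize is at least the desired target and whose weight is at most $O(1)\cdot t$; iterating or using a tree-metric embedding à la~\cite{FRT04} to handle the fact that we actually need a group/quota guarantee accounts for one $\log n$ factor, and the Chakrabarty--Swamy framework~\cite{CS16} contributes the remaining $\log^2 k$ from converting the orienteering-type guarantee into the path/latency setting. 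Fourth, if no tapped $t$ yields a violated inequality, I output that $(\alpha,\beta,\theta)\in\text{Dual}_\mathcal{T}^1$, since the budget we tested against was exactly $t$ (the $b=1$ case); otherwise I return the violated constraint at scale $bt$.

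The main obstacle I expect is controlling the approximation factor in the right direction for an \emph{approximate separation oracle}: we need a tree that \emph{over}-collects prize relative to what the LP thinks is feasible while only \emph{mildly} over-spending weight, which is exactly the bicriteria regime where off-the-shelf $k$-MST/quota-tree algorithms are delicate — a pure prize-collecting $2$-approximation does not immediately give "weight $\le bt$, prize $\ge$ target." Bridging this gap is where the $\log^2 k\log n$ comes from: I would handle it by first embedding the metric into a tree metric (losing $O(\log n)$), solving the resulting tree instance either exactly or within a small factor via dynamic programming over the host tree, and then appealing to the reduction in~\cite{CS16} (their Theorem on approximate LP-solving for the time-indexed orienteering LP) to absorb the dependence on $k$. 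Verifying that this composed oracle meets precisely the interface the Chakrabarty--Swamy ellipsoid-with-approximate-separation machinery requires — namely, that a ``no violation found'' answer certifies $\text{Dual}_\mathcal{T}^1$-feasibility — is the remaining careful bookkeeping step.
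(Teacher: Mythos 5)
There is a genuine gap in your proposal, and it stems from a misreading of the LP's notation. In constraint~\eqref{dineq1}, the index $j$ ranges over $[m]$ and is a \emph{coverage level}, not a terminal node: the paper's convention ``$j \in P$'' means the tree $P$ contains at least $j$ terminals \emph{from every one of the $k$ color groups}. So $\theta_{j,t}$ is a reward for hitting quota level $j$ simultaneously across all colors, not a node prize. Your reduction to a rooted $k$-MST / quota-tree subproblem (maximize prize subject to a weight budget, with per-node prizes) does not encode this group-coverage structure, and no amount of binary-searching over prize targets or prize-collecting Steiner tree machinery will recover it. The correct combinatorial subproblem is the \emph{covering Steiner tree} problem: for each $j$, find a minimum-weight rooted tree that includes at least $j$ terminals from each of the $k$ groups $C_1,\dots,C_k$. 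Since $\theta \ge 0$, the reward $\sum_{j' \le j}\theta_{j',t}$ is monotone in the coverage level $j$, so it suffices to compute (approximately) the minimum cost $\mathrm{OPT}_j$ of achieving level $j$ for each $j \in [m]$ and then, for each $t$, compare the cumulative reward against $\beta_t$ whenever the approximate tree fits in budget $bt$; a ``no violation'' answer then certifies feasibility of $\text{Dual}_\mathcal{T}^1$ because a violating tree of weight $\le t$ at level $j$ would force the $j$-th covering Steiner tree to have weight $\le bt$.

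Your accounting of where the factor $b = O(\log^2 k \log n)$ comes from is also off. You attribute $\log n$ to an FRT embedding and $\log^2 k$ to the Chakrabarty--Swamy framework, but the Chakrabarty--Swamy Lemma 3.2 machinery contributes no loss of its own: it merely lets one solve the LP given an approximate separation oracle with ratio $b$. The entire factor $b = O(\log^2 k \log n)$ is inherited in one shot from the best known approximation guarantee for covering Steiner tree (Gupta--Srinivasan), which already internalizes a tree-embedding step. Replacing that black box with ``Garg's $2$-approximation for $k$-MST'' or a generic quota-tree subroutine would not produce a tree meeting the per-group coverage requirements, so the oracle you describe would not be sound. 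The high-level structure of your argument --- check~\eqref{d1} directly, reduce~\eqref{dineq1} to polynomially many budgeted tree subproblems, and return a violation at the relaxed budget $bt$ or certify feasibility at budget $t$ --- is the right skeleton, but the subroutine at its core needs to be covering Steiner tree, indexed by the quota level $j$, rather than a prize-collecting tree with per-node rewards.
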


Once again, we defer the proof to the appendix, but sketch the argument. To efficiently separate,
we observe that constraint~\ref{dineq1} can be recast as a covering Steiner tree problem. Using approximation
algorithms for this problem, we find a violated inequality for a (stronger) constraint. This results in the ``approximate
separation oracle''.

\begin{theorem}
There exists a $O(\log^2 k \log n)$ approximation to the Latency-SAND problem.
\end{theorem}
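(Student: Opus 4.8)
The plan is to assemble the three ingredients already in place — the relaxation lemma for $\mathrm{LP}_{\mathcal P}^{b}$, the rounding Lemma~\ref{lem:round}, and the approximate separation Lemma~\ref{lem:apxseparate} — via the Chakrabarty--Swamy ellipsoid framework~\cite{CS16}. Fix $b = O(\log^2 k\log n)$ as in Lemma~\ref{lem:apxseparate}, and let $\mathrm{OPT}$ denote the optimum of the given Latency-\sand{} instance. First I would record a chain of relaxations: instantiating the relaxation lemma at $b=1$ gives that $\mathrm{LP}_{\mathcal P}^{1}$ relaxes Latency-\sand{}; since every path is a tree, $\mathcal P_t\subseteq\mathcal T_t$, so $\mathrm{LP}_{\mathcal T}^{1}$ relaxes $\mathrm{LP}_{\mathcal P}^{1}$; and since $b\ge 1$ only enlarges each set $\mathcal T_{bt}$ (so any feasible point of $\mathrm{LP}_{\mathcal T}^{1}$ extends to one of $\mathrm{LP}_{\mathcal T}^{b}$ of the same objective by zeroing the new $z$-variables), $\mathrm{LP}_{\mathcal T}^{b}$ relaxes $\mathrm{LP}_{\mathcal T}^{1}$. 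Hence $\mathrm{OPT}(\mathrm{LP}_{\mathcal T}^{b})\le \mathrm{OPT}(\mathrm{LP}_{\mathcal T}^{1})\le \mathrm{OPT}$.

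The core step is to compute, in polynomial time, a feasible solution $(x,z)$ of $\mathrm{LP}_{\mathcal T}^{b}$ whose objective is at most $\mathrm{OPT}(\mathrm{LP}_{\mathcal T}^{1})$; as already noted (following~\cite{CS16}), this is exactly what Lemma~\ref{lem:apxseparate} delivers. Concretely, I would run the ellipsoid method on $\mathrm{Dual}_{\mathcal T}$, querying the oracle at each candidate dual point $(\alpha,\beta,\theta)$: either it certifies $(\alpha,\beta,\theta)\in\mathrm{Dual}_{\mathcal T}^{1}$, which we treat as feasibility and follow by an objective-improvement cut, or it returns a tree witnessing a violated constraint of $\mathrm{Dual}_{\mathcal T}^{b}$, which we add as a cutting plane. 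Since $\mathrm{Dual}_{\mathcal T}^{b}$ is contained in the region the oracle certifies, which in turn is contained in $\mathrm{Dual}_{\mathcal T}^{1}$, no optimal point of $\mathrm{Dual}_{\mathcal T}^{b}$ is ever separated, so the run terminates in polynomially many iterations with value between $\mathrm{OPT}(\mathrm{Dual}_{\mathcal T}^{b})$ and $\mathrm{OPT}(\mathrm{Dual}_{\mathcal T}^{1})$. The polynomially many trees returned as separation certificates index a polynomial-size restriction of the primal $\mathrm{LP}_{\mathcal T}^{b}$ whose optimum, by LP duality against that reduced dual, is at most $\mathrm{OPT}(\mathrm{Dual}_{\mathcal T}^{1})=\mathrm{OPT}(\mathrm{LP}_{\mathcal T}^{1})$, and solving it exactly yields $(x,z)$. (Here one uses that the lengths are integral and polynomially bounded, so $T$ is polynomial and $\mathrm{LP}_{\mathcal T}^{1}$ is bounded with polynomial encoding length, making the ellipsoid run and the number of retained trees polynomial.) Feeding $(x,z)$ into Lemma~\ref{lem:round} produces an integral Latency-\sand{} solution of cost at most $O(b)$ times the objective of $(x,z)$, i.e. at most $O(b)\cdot\mathrm{OPT}(\mathrm{LP}_{\mathcal T}^{1})\le O(\log^2 k\log n)\cdot\mathrm{OPT}$, as required.

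At the level of this theorem the assembly is routine; the real work sits in the two deferred lemmas. I expect the main obstacle to be Lemma~\ref{lem:apxseparate}: recasting constraint~\eqref{dineq1} as a covering Steiner tree instance and plugging in an approximation algorithm for it so that a ``violated'' answer is a genuinely infeasible inequality of $\mathrm{Dual}_{\mathcal T}^{b}$ — it is the approximation ratio for covering Steiner tree that pins down the slack $b = O(\log^2 k\log n)$ and hence the final ratio. A secondary, more standard concern is checking that the ellipsoid method run against an approximate (rather than exact) oracle still returns a primal point feasible for the relaxed LP $\mathrm{LP}_{\mathcal T}^{b}$ at the value of the tighter LP $\mathrm{LP}_{\mathcal T}^{1}$; this goes exactly as in~\cite{CS16}.
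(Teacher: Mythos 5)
Your proposal matches the paper's proof: both invoke the Chakrabarty–Swamy framework (their Lemma~3.2) together with the approximate separation oracle of Lemma~\ref{lem:apxseparate} to solve $\mathrm{LP}_{\mathcal T}^{b}$ at the value of the tighter relaxation, then apply Lemma~\ref{lem:round} to round, yielding an $O(b) = O(\log^2 k\log n)$ approximation. You have merely spelled out the ellipsoid argument and the chain of relaxations that the paper states in one line.
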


\begin{proof}
Combining Lemma 3.2 of~\cite{CS16} with Lemma~\ref{lem:apxseparate}, we can now compute an $\epsilon$-additive optimal solution to LP$_\mathcal{T}^b$ for $b = O(\log^2 k \log n)$. Using Lemma~\ref{lem:round}, we then achieve an $O(b)$ approximation for our problem.
\end{proof}

\bibliography{sand}

\appendix
\section{Obstacles to an $O(1)$-approximation}
\label{sec:obstacles}

\sand{}, while being a generalization of well-studied problems, cannot be easily
approximated by many common approximation techniques. This section
presents some of the obstacles to getting a $O(1)$-approximation for \sand{}.

\subsection{Tree solutions are far from optimal}
One of the most interesting (and frustrating) aspect of this problem is that
a tree solution can be far from optimal. This
rules out many algorithms used for other network design problems (such as the
primal-dual ``moat-growing") as the resulting
solution produced is a tree.

\begin{lemma}
    There exists a family of instances with a $\Omega(\log n)$ gap between the best
    solution on a graph and the best solution which is a tree.  We emphasize that
    $n$ is not the length of the input but the number of nodes in the graph.
\end{lemma}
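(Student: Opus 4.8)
The plan is to exhibit an explicit family of instances and compare the cost of the best graph solution against the best tree solution. The natural candidate is the family used for lower bounds on Virtual Private Network-type problems, since the structure of \sand{} demands (each color $C_i$ routes $|C_i|$ simultaneous units to $r$, but different colors are non-simultaneous) is exactly the kind of requirement where a non-tree solution can ``reuse'' capacity across colors while a tree cannot. As the introduction of this appendix hints (``the bulk of the construction was shown in \cite{GOS11}''), I would take the hard-instance construction from that paper — a hierarchical / recursive graph on $n$ nodes built out of $\Theta(\log n)$ levels — and adapt it by choosing the color classes $C_i$ appropriately so that the $\Omega(\log n)$ separation for the tree variant carries over.

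Concretely, I would proceed in three steps. First, describe the instance: a graph $G$ on $n$ vertices (a recursively-defined gadget, roughly a layered expander-like or complete-bipartite-based construction with $\Theta(\log n)$ layers), with a root $r$ and a collection of colors $C_1,\dots,C_k$ chosen so that each color's simultaneous routing requirement can be met cheaply by a clever non-tree capacity installation. Second, exhibit a feasible graph solution of cost $O(\mathrm{OPT}_{\text{graph}})$ — here I would explicitly give the capacity installation (typically: install one unit on a well-chosen set of edges that forms a cycle/union-of-cycles rather than a tree, so that opposite-direction flows from different colors share edges, exactly the ``wide split'' phenomenon exploited in Section 2) and verify that every color can simultaneously route. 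Third, and this is the crux, prove the lower bound: \emph{every} solution whose support is a tree must pay $\Omega(\log n)$ times more. For this I would argue that in any tree $T$ spanning (a superset of) the terminals, the load on an edge $e$ is forced by a cut argument — removing $e$ from $T$ separates some set $S$ from $r$, and the load must be at least $f(S) = \max_i |C_i \cap S|$ — and then sum $w_e \cdot (\text{load})$ over the tree, showing the total is $\Omega(\log n)$ times the graph optimum. The standard technique here is a charging/volume argument over the $\Theta(\log n)$ recursion levels: the tree is forced to carry ``full'' load on edges at every level, whereas the graph solution amortizes load across levels by using cycles.

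The main obstacle I expect is precisely the tree lower bound (the third step): one must show the bound holds against \emph{all} tree solutions, including Steiner trees that use non-terminal nodes and trees that are not subgraphs of the ``obvious'' skeleton, and one must make sure the chosen colors force high load on enough edges at every level of the recursion simultaneously. This is where borrowing \cite{GOS11} pays off — their analysis already establishes the level-by-level load lower bound for the analogous requirement function — so the real work is the ``simple observation'' (mentioned in the excerpt) that reindexes their demand structure as \sand{} color classes and checks that the max-of-intersections function $f(S)=\max_i|C_i\cap S|$ inherits the needed lower bound. I would keep the graph-solution side short (it is an explicit construction plus a one-line feasibility check) and spend the bulk of the argument on verifying that the reduction from the \cite{GOS11} instance is faithful and that $n$, as promised, denotes the number of nodes rather than the encoding length (so the gap is genuinely unbounded as a function of the graph size).
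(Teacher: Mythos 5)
Your high-level plan matches the paper's: take the hard instance from~\cite{GOS11}, exhibit a cheap graph solution, and invoke their tree lower bound via the ``simple observation'' that the \sand{} cut requirement $f(S)=\max_i|C_i\cap S|$ coincides with the requirement function in that paper. That strategy is sound, and you correctly identify the cut argument (remove an edge from the tree, look at the separated set $S$, charge load at least $f(S)$) as the engine of the tree lower bound.

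However, the construction you actually describe is not the one in~\cite{GOS11}, and your description of both the feasibility argument and the tree lower bound follows from that wrong guess. The instance is \emph{not} a hierarchical or recursive gadget with $\Theta(\log n)$ levels. It is a single constant-degree expander $G=(V,E)$ on $n$ nodes with edge-expansion at least $1$, together with an extra root $r$ joined to every vertex; root-incident edges cost $n/b$ with $b=\log n$, all other edges cost $1$, and the colors are \emph{all} $b$-element subsets of $V$. The cheap graph solution simply installs one unit of capacity on all edges of $G$ plus any $b$ root edges (total cost $O(n)$), and feasibility is a one-shot cut/expansion argument: for any $S\subseteq V$, $|\delta(S)| \geq \min(|S|,n-|S|)$, which together with the chosen root edges gives $|\delta(S)| \geq |S\cap C_i| = f(S)$. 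There is no cycle-reuse or wide-split phenomenon invoked here at all; it is purely the fact that the expander is a good flow carrier. Likewise the tree lower bound in~\cite{GOS11} is not a recursion-level charging argument (there are no recursion levels): it is a dichotomy that any tree must either buy many of the expensive $n/b$-cost root edges or contain many long edge-disjoint paths, either of which costs $\Omega(n\log n)$. If you tried to carry out your plan with the layered gadget you sketched, you would be unable to cite~\cite{GOS11} for the tree lower bound and would have to prove one from scratch, and it is not clear the layered construction you imagine even admits a cheap graph solution. I'd recommend replacing the speculative construction with the expander-plus-root instance and re-deriving the two bounds accordingly.
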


\begin{proof}
    We adapt a construction from~\cite{GOS11}.  Consider an expander graph
    $G=(V,E)$ with constant degree $d \geq 3$ and edge-expansion at least 1. Now
    let $G'$ be $G$ with an additional node $r$, where $r$ has an edge to every
    vertex in $V$. Let the colors consist of all sets of size $b=\log n$ of the
    vertices, $\binom{V}{b}$.  All edges adjacent to the root $r$ have cost $\frac{n}{b}$
    and the remaining edges have cost $1$.

    One valid graph solution on $G'$ is to take any $b$ edges adjacent to $r$ and
    then all the edges in $G$. Consider any color $C_i$. Consider any subset $S$ of
    vertices of $V$. By the expansion property of $G$, the number of edges
    adjacent to $S$ is at least $|S|$ when $|S| \leq n/2$, so $\delta(S)
    \geq |S\cap C_i|$. When $|S| >n/2$, at least $n-|S|$ edges from $G$ and at least $b-(n-|S|)$ edges adjacent
    to $r$. Hence, $\delta(S) \geq b$ when $|S| > n/2$ implying $\delta(S) \geq |S\cap C_i|$.
    Therefore a flow from $C_i$ to $r$ exists and this is a feasible solution. This
    solution has cost $3 \cdot n+b \cdot n/b=4n = O(n)$.

    The original paper~\cite{GOS11} presenting this example shows that any tree solution
    has cost $\Omega(n\log n)$. They prove this by arguing that either a large number
		of $b$ cost edges adjacent must be used or there must be a large number
		of long paths which are disjoint.
\end{proof}
		
    Therefore, if any algorithm always built a tree, we can't expect to do better
    than $\Omega(\log n)$. We can achieve $O(\log n)$ with a tree solution by
    simply using FRT though. Therefore, any algorithm which improves on $O(\log n)$
    must avoid always build a tree.

\subsection{A Bad Case for Iterative Rounding}
We now give a set of examples which seems to rule out the iterative rounding approach
due to Jain~\cite{J01}.
This example comes from a special case of Kneser Graphs known as Odd Graphs.
Odd Graphs, denoted $O_s$, contain a vertex for every $s$-size
subset of $[2s+1]$. We connect to sets $S,T \subseteq[2s+1]$ in this graph
iff $S \cap T = \emptyset$. The graph has $n=\binom{2s+1}{s}$ vertices and
every vertex has degree $s+1$.

Our instance considers $O_s$ with an additional node $r$ which will serve as the root. We
connect the root $r$ to all other nodes in $O_s$.  The edges adjacent to the
root $r$ all have weight $2$ and all the remaining edges have weight $1$. In a
For any edge $(u,v) \in E$, we define a color $C_{uv}=(\{u\}\cup N(u)) -\{v\}$ (i.e.~$u$
and $u$'s whole neighborhood except for $v$).

One valid fractional solution is for all the edges $e$, adjacent to the root to
have $x_e=\frac{s+1}{n}$, and all the edges $e$, not adjacent to the root to
have weight $x_e = \frac{s+1}{s^2+1}-\frac{(s+1)^2}{(s^2+1)n}$. So, the total
cost is:
\[ 2n\frac{s+1}{n} + \frac{n(s+1)}{2}(\frac{s+1}{s^2+1}-\frac{(s+1)^2}{(s^2+1)n})
    \approx 2(s+1)+\frac{n}{2}-\frac{s+1}{2}\approx \frac{n}{2}+\frac{3}{2}(s+1) \]

Any valid integral solution must have every node attached to the root.
Therefore, the cost of the minimum spanning tree is a lower bound on the cost
of the best integral solution. The cost of a minimum spanning tree is at least
$n+1$; there are $n+1$ nodes in the whole graph, and all the edges have weight
1 except for all the edges adjacent to the root which all have weight 2.

As $s$ increases, then we get the ratio between the best fractional solution
and the best integral solution goes to $2$. Therefore, the best approximation
we can hope for when using this LP is $2$.

In addition, we believe that the solution we gave above is an extreme point
and can numerically verify this for $s \leq 15$ (a graph containing over $300$ million vertices).
The values $x_e$ get arbitrarily close to 0 as $s$ increases. This indicates
that iterative rounding on this LP will not yield a constant approximation.

\section{Missing Proofs in Section 3}

\subsection{Proof of Claim~\ref{clm:nodouble}}

\begin{proof}
In order to prove this claim, it is enough to show that if $\mathcal S$ does not satisfy the condition of the claim for some subset of terminals $\mathcal C \subseteq C_1 \cup C_2$, then we can construct another solution $\mathcal S'$, whose cost is at most the cost of $\mathcal S$, that does not satisfy the condition of the claim for some subset of nodes $\mathcal C' \subset \mathcal C$.

Let $z \in C_q$ be an arbitrary terminal for which the condition of the claim is not satisfied, with $q \in \{1,2\}$,
and let $y$ be the first node in $\bigcup_{i=1}^p\{y_i^1,y_i^3,y_i^5,y_i^7\}$ on its path $P$ to $r$. By assumption, $P$
does not contain the edge $e:=\{y,r\}$.

Suppose first that $ y=y_i^1$ for some index $i$ (the proof for $y=y_i^7$ is similar). Then, necessarily, $z=y_i^1$, and the first two edges of $P$ are $\{y_i^1, y_i^2\}$ and $\{y_i^2, y_i^3\}$. We construct $\mathcal S'$ by changing $P$ with the path formed by the single edge $e=\{y_i^1,r\}$. If $e$ is in the support of $\mathcal S$, by Fact 2, there can be only terminals in $C_2$ using it, and therefore in $\mathcal S'$ we do not need to increase the capacity of any edge.
If instead the edge $e$ is not in the support of $\mathcal S$, then the node $y_i^1$ is a node of degree one in the support of $\mathcal S$. When we change the path for $y_i^1$, in $\mathcal S'$ we have to increase the capacity of $e$ to one. On the other hand, we can decrease the amount of capacity of the edge $\{y_i^1, y_i^2\}$ by one, since no other terminal is using that edge. As for the edge $\{y_i^2, y_i^3\}$, note that this edge is either not used by any terminal, or it is used by one terminal $k_j \in C_1 \cap C_2$ for some $j$. In both cases, we can decrease its amount of capacity by one. It follows that the cost of $\mathcal S'$ is at most the cost of $\mathcal S$, and $\mathcal C' \subset \mathcal C$.

Suppose now that $y=y_i^3$ for some index $i$ (the proof for $y_i^5$ is similar).  Then, necessarily, the first two edges of the subpath $\tilde P \subseteq P$ from $y_i^3$ to $r$, are either $\{y_i^3, y_i^2\}$ and $\{y_i^2, y_i^1\}$, or $\{y_i^3, y_i^4\}$ and $\{y_i^4, y_i^5\}$, and we have therefore two cases.

Case A: the first two edges of $\tilde P$ are $\{y_i^3, y_i^2\}$ and $\{y_i^2, y_i^1\}$. Let $\bar q \in \{1,2\}$ be different from  $q$. If there is no terminal in $C_{\bar q}$ whose path to $r$ uses these edges, then changing  $\tilde P$ to $e=\{y_i^3,r\}$ implies increasing the capacity on $e$ by one, and decreasing the capacity of the edges $\{y_i^3, y_i^2\}$ and $\{y_i^2, y_i^1\}$ by one. This yields a solution $\mathcal S'$ with $\mathcal C' \subset \mathcal C$, whose cost is no greater than the cost of $\mathcal S$. Suppose now that there is a terminal $\bar z$ in $C_{\bar q}$ whose path uses both the edges $\{y_i^3, y_i^2\}$ and $\{y_i^2, y_i^1\}$. Note that $\bar z \neq y_i^1$, since otherwise $y_i^1 \in \mathcal C$ but by the previous argument we can assume that this is not the case. Therefore, necessarily, the flow going from $\bar z$ to $r$ travels first on the edge $\{y_i^3, y_i^2\}$ and then on
$\{y_i^2, y_i^1\}$ (i.e. the flow from $z$ to $r$ and the flow from $\bar z$ to $r$ induce the same orientation on these edges). Then, in the solution $\mathcal S'$ we change the paths for both $z$ and $\bar z$, by substituting the subpaths from $y_i^3$ to $r$ with the edge $e$. Once again, this implies increasing the capacity on $e$ by one, and decreasing the capacity of the edges $\{y_i^3, y_i^2\}$ and $\{y_i^2, y_i^1\}$ by one, and therefore $\mathcal S'$ has $\mathcal C' \subset \mathcal C$, and cost no greater than the cost of $\mathcal S$.
Finally, suppose that there is no terminal in $C_{\bar q}$ whose path uses both the edges $\{y_i^3, y_i^2\}$ and $\{y_i^2, y_i^1\}$, but there is a terminal $\bar z \in C_{\bar q}$ that uses exactly one of them. Then, $\bar z = k_{i_1}$. If $P_{\bar q}^{i_1}$ is the path with nodes $\{k_{i_1}, y_i^2, y_i^1, r\}$, then we construct $\mathcal S'$ by changing $P_{\bar q}^{i_1}$ to the path with nodes $\{k_{i_1}, y_i^2, y_i^3, r\}$, and by changing $\tilde P$ to $e$. This implies decreasing the capacity on the edges $\{y_i^2, y_i^1\}$
and $\{y_i^1, r\}$ by one, and increasing by one the capacity on the edges $\{y_i^2, y_i^3\}$ and $\{y_i^3, r\}$. One can see that $\mathcal S'$ has  $\mathcal C' \subset \mathcal C$, and cost no greater than the cost of $\mathcal S$. If instead
$P_{\bar q}^{i_1}$ is not the path with nodes $\{k_{i_1}, y_i^2, y_i^1, r\}$, then necessarily the second edge of $P_{\bar q}^{i_1}$ is $\{y_i^2, y_i^3\}$. By Fact 2, this implies that there is no terminal of $C_{\bar q}$ using the edge $\{y_i^1, r\}$, other than possibly $y_i^1$.
Furthermore, by Fact 1, we know that also the first edge of $P_q^{i_1}$ is $\{k_j, y_i^2\}$, and by Fact 2, the other edges of $P_{q}^{i_1}$ are $\{y_i^2, y_i^1\}$ and $\{ y_i^1,r\}$. This implies that the capacity of the edge $\{y_i^1, r\}$ is at least 2 in $\mathcal S$. If we construct $\mathcal S'$ by changing $\tilde P$ to $e$, we can decrease the capacity of the edge $\{y_i^1, r\}$ by one, and increase the capacity on $e$ by one. Once again, the result follows.

Case B: the first two edges of $\tilde P$ are $\{y_i^3, y_i^4\}$ and $\{y_i^4, y_i^5\}$. If there is no terminal in $C_{\bar q}$ using the edges $\{y_i^3, y_i^4\}$ and $\{y_i^4, y_i^5\}$, or if there is a terminal $\bar z \in C_{\bar q}$ that uses (at least one of) these edges, but such that the flow going from $z$ to $r$ and the flow from $\bar z$ to $r$ induce the same orientation on these edges, then the statement follows similarly to the previous Case A. The only additional case we have to handle here, arises if there is a terminal $\bar z \in C_{\bar q}$ that uses the edge $\{y_i^3, y_i^4\}$ but such that the flow from $\bar z$ to $r$ and the flow from $z$ to $r$ induce opposite orientation on this edge. In this case, by the arguments of Case A, we can assume that the flow from $\bar z$ to $r$ travels on the edge $e=\{y_i^3, r\}$. Therefore, if there is no terminal in $C_q$ using $e$ we construct $\mathcal S'$ by simply changing $\tilde P$ to $e$. One sees that $\mathcal S'$ has the same cost as $\mathcal S$ and $\mathcal C' \subset \mathcal C$.
Suppose instead that there is some terminal in $C_q$ using $e$. Then, necessarily $q\neq 1$ (because the only terminal in $C_1$ different from $z$ that could potentially use $e$ without violating Fact 2 is $y_i^1$, but we already argued that $y_i^1$ routes on the edge $\{y_i^1,r\}$). Furthermore, using again Fact 2, we can say that there is exactly one terminal $\tilde z$ in $C_2$ routing on $e$, and we have either $\tilde z = y_i^3$ and $\bar z = k_{i_1}$, or $\bar z = y_i^3$ and $\tilde z = k_{i_1}$. In either case, by Fact 1 and Fact 2, we know that $P_1^{i_1}$ must contain the edges $\{y_i^2, y_i^1\}$ and $\{y_i^1, r\}$. We construct $\mathcal S'$ by changing $\tilde P$ to $e$, and by changing $P_2^{i_1}$ to be equal to $P_1^{i_1}$. One can see that these changes do not require increasing the capacity on any edge, and therefore the result follows.
\end{proof}

\subsection{Proof of Claim~\ref{clm:second}}
\begin{proof}
First, observe that a straightforward corollary of Claim~\ref{clm:nodouble} is that every terminal $y \in \bigcup_{i=1}^p\{y_i^1,y_i^3,y_i^5,y_i^7\}$ sends flow to $r$ on the path $\{y,r\}$. Let $\tilde n_i \leq n_i$ be the number of nodes $k_j$ that use edges of $G_i$ to route flow to $r$ and, in addition, satisfy $P_1^j = P_2^j$. Using the previous observation, the cost of the capacity on the edges $E_1 \cap E(G_i)$ is at least $2(4+\tilde n_i)$. Furthermore, another consequence of Claim 1 is that if a terminal $k_j$ uses edges of $G_i$ to route flow to $r$, and the paths $P_1^j$ and $P_2^j$ share the second edge, then $P_1^j=P_2^j$ and these paths contain exactly one edge of $E_2$ (and one edge of $E_1$). It follows that the capacity installed on the edges $E_2 \cap E(G_i)$ is exactly $\tilde n_i + 2(n_i -\tilde n_i) = 2n_i -\tilde n_i$. Putting things together,
\begin{equation}\label{eq:cost}
\chi_i \geq 2(4+\tilde n_i) +2n_i - \tilde n_i = 8 + 2n_i + \tilde n_i \geq 8 + 2n_i
\end{equation}
We now prove that if $x_i$ is in conflict, then either the first or the last inequality of (\ref{eq:cost}) is strict. If $x_i$ is in conflict, then $n_i >1$.
Suppose $n_i =2$, and let $k_j, k_{j'}$ be the terminals that use edges of $G_i$ to route flow to $r$.
Without loss of generality, let the endpoint of the first edge of $P_j^1$ be $y_i^2$, and
the endpoint of the first edge of $P_{j'}^1$ be $y_i^4$ (the other case is similar). If $\tilde n_i>0$ then the last inequality of (\ref{eq:cost}) is strict. If instead $\tilde n_i=0$, we know that $P_1^j$ and $P_2^j$ as well as $P_1^{j'}$ and $P_2^{j'}$
do not share the second edge. In this case, the cost of the capacity installed on the edges $E_2 \cap E(G_i)$ is 4. However, we need 2 units of capacity on at least one edge of $E_1 \cap E(G_i)$, implying that the cost of the capacity installed on the edges $E_1 \cap E(G_i)$ is at least 10, and therefore $\chi_i$ is at least $14 > 8+2n_i$.

Suppose $n_i =3$. As in the previous case, if $\tilde n_i >0$ the last inequality of (\ref{eq:cost}) is strict. If instead $\tilde n_i=0$, then all the paths used by these $n_i$ terminals to route flow to $r$ do not share the second edge. Then, the cost of the capacity installed on the edges $E_2 \cap E(G_i)$ is 6. However, in this case we also need 2 units of capacity on at least two edges of $E_1 \cap E(G_i)$, implying that the cost of the capacity installed on the edges $E_1 \cap E(G_i)$ is at least 12, and therefore $\chi_i \geq 18 > 8+2n_i$.
\end{proof}

\section{Missing Proofs in Section 4}

\subsection{Proof of Lemma~\ref{lem:round}}
\begin{proof}
For each time $t=2^i$ for $i=0, \dots, \log L$, sample a tree based on the distribution $z_{\mathcal{T}, t}$ - here $L$ is an upper bound on the length of a tree that covers all colors, such as the value of an MST on them. Denote the tree sampled at time $t$ by $T_t$. We ``Eulerify" (walk around the tree) $T_t$ into a path $P_t$ which starts and ends at $r$. This path $P_t$ has length at most $2bt$. Now our solution will be the path formed by concatenating $P_1, P_2, P_4, \dots, P_{L}$. This is a feasible solution because at time $2L$ we are able to pick up all the nodes of all colors by taking a path around a spanning tree. (By using standard scaling methods~\cite{CS16}, we can reduce the $\log L$ factor in this sampling to $O(\log n)$ to get a strongly polynomial algorithm - we omit the details.)

Now we show that the expected cost of this solution is $O(b)$ times the cost of the linear program. Let $t_j$ denote the first time when $\sum_{t} x_{j,t} \geq 2/3$. Then, we know the contribution of the $x_{j,t}$ to the objective function is at least $\frac{1}{3}t_j$. For every time $t' \geq t_j$ the probability the tree we pick has $j$ elements of each color is at least $2/3$ by our definition of $t_j$. So, the expected length of the path before we get $j$ elements of each colors in the path is given by a geometric sum.
\begin{align*}
\sum_{i=\log t_j}^\infty \Pr[ & \text{ $j\in T_{2^i}$ and $j\notin T_{2^{i-1}}$}] \text{(Cost of the first $i$ trees)}  \\&=
\sum_{i=\log t_j}^\infty \frac{2}{3}\left(\frac{1}{3}\right)^{i-\log t_j} (4b2^i)\\
&= \frac{8}{3} b2^{\log t_j} \sum_{i=\log t_j}^\infty 2^{i-\log t_j} \left(\frac{1}{3}\right)^{i-\log t_j}\\
&= \frac{8}{3} bt_j \sum_{i=0}^\infty  \frac{2}{3}^{i}\\
&= \frac{16}{3} bt_j
\end{align*}
Since the contribution of the $x_{j,t}$ to the objective function is at least $\frac{1}{3}t_j$, the expected cost of the whole path is at most $O(b)$ times the cost of the LP.
\end{proof}

\subsection{Proof of Lemma~\ref{lem:apxseparate}}

\begin{proof}
Given $(\alpha,\beta,\theta)$,
it is easy to verify all constraints except for Equation~\ref{dineq1}.
We interpret $\theta_{j,t}$ to be the reward for collecting $j$ terminals of each color by time $t$ and $\beta_t$ to be the budget by time $t$. This constraint says that we want to ensure that every path (tree in the relaxation) rooted at
$r$ of length at most $bt$ can achieve no more than $\beta_t$ rewards.

To find a violated constraint, we solve a covering Steiner tree problem~\cite{KR00}, one for each $j$: the problem is defined by $k$ groups one for each color $C_i$ and we require that each group has $j$ terminals connected using the smallest length tree. Using the best-known approximation algorithm for covering Steiner tree~\cite{Gupta-Srinivasan} we obtain an approximation factor of $b=O(\log^2 k \log  n)$.  Now we can check for each time $t$, if the rewards collected are bounded above by the budget $\beta_t$  where we allow trees of size $b t$.
\end{proof}

\end{document}